\newtheorem{Remark}{\it Remark}[section]
\newtheorem{Proposition}{\it Proposition}[section]
\begin{document}
%
% paper title
% can use linebreaks \\ within to get better formatting as desired
\title{Asymptotic Capacity of Large Fading Relay Networks
with Random Node Failures}

% author names and affiliations
% use a multiple column layout for up to three different
% affiliations

\author{\IEEEauthorblockN{Chuan Huang,~\IEEEmembership{Student Member,~IEEE,}~Jinhua Jiang,~\IEEEmembership{Member,~IEEE,} \\ Shuguang
Cui},~\IEEEmembership{Member,~IEEE,}

\thanks{Part of the work was presented at the Forty-Seventh Annual
Allerton Conference on Communication, Control, and Computing, and
GLOBECOM'09.}

\thanks{Chuan Huang and Shuguang Cui are with the Department of Electrical
and Computer Engineering, Texas A\&M University, College Station,
TX, 77843. Emails: \{huangch, cui\}@tamu.edu.}

\thanks{Jinhua Jiang is with the Department of Electrical Engineering,
Stanford University, Stanford, CA, 94305, Email:
jhjiang@stanford.edu.}}

% make the title area
\maketitle
\begin{abstract}
%\boldmath
To understand the network response to large-scale physical
attacks, we investigate the asymptotic capacity of a half-duplex
fading relay network with random node failures when the
number of relays $N$ is infinitely large. In this paper, a
simplified independent attack model is assumed where each relay
node fails with a certain probability. The noncoherent relaying scheme is considered, which corresponds to the case of zero forward-link channel state information (CSI) at the
relays. Accordingly, the whole
relay network can be shown equivalent to a Rayleigh fading
channel, where we derive the $\epsilon$-outage capacity upper
bound according to the multiple access (MAC) cut-set, and the
$\epsilon$-outage achievable rates for both the
amplify-and-forward (AF) and decode-and-forward (DF) strategies. Furthermore, we show that the DF strategy is
asymptotically optimal as the outage probability $\epsilon$ goes
to zero, with the AF strategy strictly suboptimal over all signal to noise ratio (SNR)
regimes. Regarding the rate loss due to random attacks, the AF strategy suffers a less portion of rate loss than
the DF strategy in the high SNR regime, while
the DF strategy demonstrates more robust performance in the low SNR regime.
\end{abstract}

\begin{IEEEkeywords}
Amplify-and-forward, decode-and-forward, asymptotic, large relay
networks, random attacks.
\end{IEEEkeywords}

% IEEEtran.cls defaults to using nonbold math in the Abstract.
% This preserves the distinction between vectors and scalars. However,
% if the conference you are submitting to favors bold math in the abstract,
% then you can use LaTeX's standard command \boldmath at the very start
% of the abstract to achieve this. Many IEEE journals/conferences frown on
% math in the abstract anyway.

% no keywords

% For peer review papers, you can put extra information on the cover
% page as needed:
% \ifCLASSOPTIONpeerreview
% \begin{center} \bfseries EDICS Category: 3-BBND \end{center}
% \fi
%
% For peerreview papers, this IEEEtran command inserts a page break and
% creates the second title. It will be ignored for other modes.
\IEEEpeerreviewmaketitle
\section{Introduction}

Node cooperation has been shown as an effective way to improve
system capacity and provide diversity in wireless networks. One of
the promising cooperation schemes is the use of relays, while the
capacity of general relay channels is still an open problem. The
full-duplex relay channels, in which relay nodes can transmit and
receive simultaneously, have been intensively investigated, e.g.,
in \cite{cover}, \cite{xie}, and \cite{kramer}, where various
achievable rates and special-case capacity results have been
obtained. The more practical case based on half-duplex relays has
also been intensively studied, which will be reviewed next.

\subsection{Related Works}

In practice, relay nodes can only work in a half-duplex mode,
which means that they cannot transmit and receive simultaneously
over the same frequency band. The half-duplex relay channels using
the decode-and-forward (DF) and compress-and-forward (CF)
strategies have been studied in \cite{gamal} and \cite{madsen} and
the references therein. In \cite{tse_high}, the authors
investigated the outage probability of fading half-duplex relay
channels using both the DF and amplify-and-forward (AF) strategies in
the high signal to noise ratio (SNR) regime; in \cite{tse_low}, the asymptotic
performance of the DF and AF strategies in the low SNR regime was studied,
and the burst AF (BAF) strategy was proved optimal in special
scenarios.

The parallel relay channel, which only contains two relay nodes, was first studied in \cite{schein},
where the achievable rates are obtained by using the AF and DF
strategies, and power sharing between these two strategies
was shown able to achieve a higher rate. In \cite{xue}, the authors discussed two
time-sharing schemes and various relay strategies, e.g., AF, DF,
and CF.

In \cite{Gastpar1,Gastpar2,Gastpar3,blocskei,coso}, the asymptotic
behavior of half-duplex large relay networks was studied.
Considering the joint source channel coding problem for a class of
Gaussian relay networks \cite{Gastpar1,Gastpar2,Gastpar3}, the
capacity is shown to be asymptotically achievable with the AF strategy
when the number of relays tends to infinity. However, for fading
relay networks, the capacity is still unknown. In \cite{blocskei},
the authors extended these results to fading multiple-input and
multiple-output (MIMO) relay networks, where both the coherent and
noncoherent relaying schemes are discussed, assuming perfect
channel state information (CSI) and zero CSI at the relays,
respectively. For the coherent case with CSI, the authors showed
that the achievable rate scales as $\mathcal{O}(\log(N))$ in the
high SNR regime; for the noncoherent case without CSI, the relay
networks were shown to behave as a point-to-point MIMO channel in
the high SNR regime. In \cite{coso}, the authors studied the
scaling laws of the AF, DF, and CF strategies for Gaussian
parallel relay networks.

\subsection{Our Contributions}

In this paper, we study half-duplex fading relay networks
and impose a total power constraint across all nodes, i.e., the
total transmit power consumed in the whole network is less than a
finite value $P$. Furthermore, both the AF and DF strategies are
investigated; and we assume that there is zero forward-link
(relay-to-destination link) CSI at the relays, which leads to the
noncoherent relaying scheme. We focus
on an unreliable networking scenario that takes into account the
random failures of the relays, where each relay is prone to random
physical attacks such as wild fire or power losses \cite{chuan,chuan2}. In such
situations, we may know the total number of relays, but may not
know which subset of them is actually functioning. As such, we
adopt a simplified model where each of the relays is independently
failing with a probability $p$ such that the number of operable
relays is an unknown random variable. Under such a setup, we study
the asymptotic capacity upper bound and achievable rates for this
relay system when the number of relays $N$ is infinitely large: For the
noncoherent scheme, this network can be shown equivalent to a
Rayleigh fading channel, and we study the $\epsilon$-outage rates
achieved by various relay strategies. We summarize
the main results of the paper as follows:
\begin{enumerate}
    \item For the case with zero forward-link CSI at the relays, we derive the $\epsilon$-outage capacity upper bound via the multiple access (MAC)
cut-set, the AF $\epsilon$-outage rate, and the DF
$\epsilon$-outage rate, all of which scale on the order of
$\mathcal{O}(\log(\gamma_0))$ as $\gamma_0$ gets large, with $\gamma_0$ the transmit SNR.

\item For the AF strategy, we quantify
the gap between the achievable rate and the MAC bound in the high
SNR regime; in the low SNR regime, we show that the AF strategy
performs poorly. Moreover, for general $\gamma_0$ values, we show that it
is a quasiconcave problem to determine the optimal power
allocation between the source and the relays. For the DF strategy,
we prove that it is asymptotically optimal as the outage
probability $\epsilon$ goes to zero. Moreover, we derive the closed-form expression for the
optimal power allocation factor when $\epsilon$ is small.
    \item Regarding the effect of random node failures, we derive the upper
    bound of the achievable rate loss.
    Moreover, it is proved that the AF strategy is less sensitive to random attacks than the DF strategy in the high SNR
    regime; the reverse is true
    for the low SNR regime.
\end{enumerate}

It is worth pointing out that in \cite{chuan2}, we investigated the associated scaling law and power allocation problem for the DF strategy in the coherent relay case, which assumes perfect forward-link CSI at the relays.

\subsection{Organization and Notations}

The remainder of the paper is organized as follows. In Section II,
we present the network, channel, and signal models, and also
introduce all the assumptions. In Section III, we derive the capacity upper bound using MAC cut-set. In Section IV, we focus on the achievable rates for both the AF and DF strategies, and
discuss their performances in both the high and low SNR regimes. In section V, we present some simulation and numerical results. Finally, the paper
is concluded in Section VI.

Notation: we define the following notations used throughout this
paper.
\begin{itemize}
    \item $\text{dist}(\mathbf{a},\mathcal{S})$ is the distance
    from a point $\mathbf{a}$ to a set $\mathcal{S}$, which is
    defined as $\min_{\mathbf{b} \in \mathcal{S}} \| \mathbf{a} -\mathbf{b}
    \|$ with $\| \mathbf{a} -\mathbf{b}
    \|$ a distance measure between two points $\mathbf{a}$ and
    $\mathbf{b}$;
    \item $f(x)\sim g(x)$ means $\lim_{x \rightarrow
    +\infty}\frac{f(x)}{g(x)}=1$;
    \item $X \sim \mathcal{O}(N)$ means $\lim_{N \rightarrow +\infty}
    \frac{X}{N}=C$, where $C$ is a finite positive constant;
    \item $A \sim \mathcal{O}(1)$ means $A$ is a bounded constant;
    \item $\log(x)$ and $\ln(x)$ are the base-2 and natural logarithms, respectively;
    \item $\mathbb{E}(X)$ is the expectation of a random variable $X$.
\end{itemize}

\section{Assumptions and System Model} \label{system_model}

\subsection{General Assumptions}
We consider a relay network with a pair of source and destination
nodes, which are assumed to be located at two fixed positions
$\mathbf{s}_S$ and $\mathbf{s}_D$ in a given region, respectively,
and $N$ relay nodes independently and identically distributed
(i.i.d.) over a given area $\mathcal{S}$ with a probability
density function (PDF) $p(\mathbf{s}),~\mathbf{s} \in
\mathcal{S}$. Under a dead-zone assumption, we let
$\text{dist}(\mathbf{s}_S,\mathcal{S}) \geq s_0$ and
$\text{dist}(\mathbf{s}_D,\mathcal{S}) \geq s_0$, where $s_0>0$ is
a positive constant to define the radius of a dead-zone. In the dead-zone, communication is not permitted to
ensure that the received power is always bounded. Under the above
setting, the 2-D network topology is shown in Fig. \ref{fig1}.

We assume that each node has only one antenna, the relays work in
the half-duplex AF or DF mode, and the transmissions follow a
time-slotted structure. Specifically, each time slot is split into
two parts: In the first part, the source broadcasts the message to
all the relays; in the second part, the relays transmit certain
messages to the destination based on what they received in the
first half. Generally, for the DF strategy, we could allocate
different time fractions to these two parts to optimize the system
performance \cite{madsen}, while keep an equal time allocation for
the AF strategy. For simplicity, even for DF we assume that these
two parts are of equal length. In addition,
we assume no direct transmissions between the source and the
destination due to their relatively large distance.

For the AF strategy, the relays simply forward a scaled version of the
received signal. For the DF strategy, the relays decode the source
message based on the backward-link (source-to-relay link) CSI: If
a relay cannot decode the source message successfully, it keeps
silent in the forwarding phase; otherwise, it re-encodes the
decoded message with the same codebook as used in the source node
and transmits the codeword to the destination. Due to the fading effects in the
backward-links, it is possible that the relays cannot successfully
decode the source message, which is regarded as a type of
fading-related decoding failures. Moreover, with random physical
attacks, we assume that each relay may die with a probability $p$
independently within the time interests of operation. Accordingly,
we further assume that these two types of random failures are
statistically independent. Thus, the overall number of successful
functioning relays (denoted as $L$) can be modelled as a binomial
random variable with parameter $\left( N,p_0(1-p) \right)$, where
$p_0$ is the probability of successfully decoding in the relays,
with $0 \leq p_0 \leq 1$.

We also impose a constraint on the total transmit power consumed
by the source and all the relays. In particular, we set the total
transmit power spent in the network as a finite value $P$, denote
the power allocation fraction to the source node as $\alpha$ with
$\alpha \in [0,1)$, and allocate the rest $(1-\alpha)P$ power to
all the relays.

\subsection{Channel and Signal Models}
In the first half of a given time slot, the channel input-output
relationship between the source and the $i$-th relay (located at
position $\mathbf{s}_i$) is given as
\begin{eqnarray} \label{t1}
r_i=\sqrt{\alpha P\rho_{Si}}h_{Si}x+n_i,~i=1,2,\cdots,N,
\end{eqnarray}
where $r_i$ is the received signal at the $i$-th relay, $x$ is the
unit-power symbol transmitted by the source, $\alpha P$ is the
transmit power allocated to the source, the short-term fading
coefficients $h_{Si}$'s are i.i.d. zero mean complex Gaussian
random variables with unit variance, i.e., $h_{Si}\sim
\mathcal{CN}(0,1)$, $\rho_{Si}$ is the average channel power gain
of the $i$-th source-to-relay link due to the long-term path-loss
with $\rho_{Si}=\frac{1}{\|\mathbf{s}_S - \mathbf{s}_i
\|^{\theta}}$, where $\theta$ is the path loss exponent \cite{tse}
and $\|\mathbf{x}\|$ is the Euclidean norm of vector $\mathbf{x}$,
and $n_i$ is the i.i.d. complex additive white Gaussian noise
(AWGN) with distribution $\mathcal{CN}(0,N_0)$.

In the second half of a given time slot, a symbol $t_i$ is
transmitted from the $i$-th relay to the destination. Based on
different forward-link CSI assumptions and various relay
strategies, we can properly design $t_i$'s to achieve the best
system performance. Since we assume that there is no direct link
between the source and the destination, the received signal $y$ at
the destination is given as
\begin{eqnarray} \label{t2}
y=\sum_{i=1}^{L} {\sqrt{\rho_{iD}} h_{iD} t_i} + w,
\end{eqnarray}
where $L$ is a binomial random variable with $\{1,2,\cdots,L\}$
denotes the subset of successful functioning relays, $h_{iD}$'s
are i.i.d. with $h_{iD}\sim \mathcal{CN} (0,1)$, $\rho_{iD}$ is
the path-loss of the $i$-th forward-link with $\rho_{iD}=\frac{1}
{\| \mathbf{s}_i - \mathbf{s}_D\|^{\theta}} $, and $w$ is the
complex AWGN with distribution $\mathcal{CN} (0,N_0)$. Note that
here message $t_i$ may not have a unit instantaneous power and the
long term average sum transmit power across all the relays is
equal to $(1-\alpha)P$.

\section{MAC Cut-Set Upper Bound}

We now consider a large relay network where only the backward-link
CSI is available at the relays, i.e., no knowledge about the
forward-link CSI. In this case, coherent transmission is
impossible for the second-half time slot. Therefore, each relay
should transmit with identical power level to achieve the optimal
performance, i.e., the power allocated to each relay is
$\frac{(1-\alpha)P}{N}$. Accordingly, we adopt the
$\epsilon$-outage rate (see Chapter 5 of \cite{tse}) as the performance criterion,
which is defined over a target transmission outage probability
$\epsilon$.

In this section, we derive an upper bound on the $\epsilon$-outage
capacity \cite{tse} of the large relay network defined in the
previous section. For a given finite $N$, it is difficult to
obtain an explicit expression for the outage rate upper bound. As
such, we focus on the case where $N$ is infinitely large, and
investigate its asymptotic behavior.

\begin{Proposition} \label{prop_non_upp}
Without the forward-link CSI at the relays, when $N$ goes to infinity,
the $\epsilon$-outage capacity upper bound of the large relay
network defined in Section \ref{system_model} is asymptotically
given by
\begin{eqnarray} \label{capacityupper}
C_{\text{upper}} = \frac{1}{2} \log \left( 1+
\gamma_{\text{upper}} \ln\left( \frac{1}{1-\epsilon} \right)
\right),
\end{eqnarray}
where $\gamma_{\text{upper}}$ is the upper bound of the average
received SNR at the destination given as
\begin{eqnarray} \label{snrupper}
\gamma_{\text{upper}}=(1-p)(1-\alpha)\gamma_0\mathbb{E}(\rho_{iD}),
\end{eqnarray}
with $\gamma_0=\frac{P}{N_0}$ and $\mathbb{E}(\rho_{iD})=\int_{\mathbf{s}\in \mathcal{S}}
\rho_{iD}(\mathbf{s}) p (\mathbf{s})d\mathbf{s}$.
\end{Proposition}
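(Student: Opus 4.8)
The plan is to collapse the second hop, across the MAC cut separating the destination from the source--relay ensemble, into an equivalent scalar Rayleigh fading channel, and then read off its $\epsilon$-outage capacity. First I would invoke the cut-set bound: since there is no direct source--destination link and the slot is split equally, the capacity is upper bounded by $\frac{1}{2}I(t_1,\dots,t_L;y)$ evaluated across the relays-to-destination MAC. The cut-set bound grants the relays the source message irrespective of their backward-link decoding, so only the physical node failures thin the transmitting set; hence $L$ enters as $\mathrm{Binomial}(N,1-p)$ and the decoding probability $p_0$ never appears, which already explains its absence from $\gamma_{\text{upper}}$. Because the relays have no forward-link CSI they cannot beamform, so I fix the symmetric per-relay power $\frac{(1-\alpha)P}{N}$ that the preamble argues is optimal in the noncoherent regime and let every live relay carry the same forwarded symbol $x$.

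Under this model the destination observes $y=g\,x+w$ with the aggregate coefficient $g=\sqrt{(1-\alpha)P/N}\,\sum_{i=1}^{L}\sqrt{\rho_{iD}}\,h_{iD}$. Conditioned on $L$ and on the node placements, $g$ is \emph{exactly} $\mathcal{CN}\!\big(0,\,\tfrac{(1-\alpha)P}{N}\sum_{i=1}^{L}\rho_{iD}\big)$, since it is a linear combination of the i.i.d.\ coefficients $h_{iD}\sim\mathcal{CN}(0,1)$. The task then reduces to controlling its variance as $N\to\infty$.

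The crux is the large-$N$ limit. I would show that the conditional variance $\frac{(1-\alpha)P}{N}\sum_{i=1}^{L}\rho_{iD}$ concentrates: treating each relay as independently alive with probability $1-p$ and i.i.d.\ placed, the strong law of large numbers gives $\frac{1}{N}\sum_{i=1}^{L}\rho_{iD}\to(1-p)\,\mathbb{E}(\rho_{iD})$ almost surely, so the variance converges to the deterministic value $(1-\alpha)(1-p)P\,\mathbb{E}(\rho_{iD})$. Consequently $g$ becomes a fixed-variance circularly symmetric complex Gaussian and the received SNR $\gamma=|g|^2/N_0$ is asymptotically $\mathrm{Exp}(1/\gamma_{\text{upper}})$ with $\gamma_{\text{upper}}=(1-\alpha)(1-p)\gamma_0\,\mathbb{E}(\rho_{iD})$; i.e.\ the whole network behaves as a scalar Rayleigh link. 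Finally, with $\Pr\{\tfrac{1}{2}\log(1+\gamma)<R\}=1-\exp(-(2^{2R}-1)/\gamma_{\text{upper}})$, setting this equal to $\epsilon$ and solving gives $2^{2R}-1=\gamma_{\text{upper}}\ln\frac{1}{1-\epsilon}$, which is precisely \eqref{capacityupper}.

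The main obstacle I anticipate is exactly this concentration step: making the convergence rigorous when the binomial count $L$, the i.i.d.\ path-loss gains $\rho_{iD}$, and the Rayleigh coefficients $h_{iD}$ are all coupled, and verifying that the limiting law is genuinely Gaussian with the stated variance, so that errors in both the variance and the distributional shape vanish. A secondary point to pin down is that, in the noncoherent cut, the symmetric common-message transmission through a single aggregate coefficient is the correct model, so that $C_{\text{upper}}$ really dominates every admissible relaying strategy rather than merely the one analyzed.
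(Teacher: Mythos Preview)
Your proposal is correct and follows essentially the same route as the paper: invoke the MAC cut-set so the surviving relays are granted the source message, have each live relay transmit $\sqrt{(1-\alpha)P/N}\,x$, identify the aggregate coefficient as asymptotically $\mathcal{CN}\big(0,(1-p)(1-\alpha)P\,\mathbb{E}(\rho_{iD})\big)$, and read off the $\epsilon$-outage capacity of the resulting scalar Rayleigh link. The only minor difference is in how the limiting Gaussianity is justified: the paper cites Robbins' theorem on the asymptotic distribution of a sum of a random number of random variables, whereas you exploit that, conditioned on $L$ and the placements, the sum is \emph{exactly} complex Gaussian and then use the law of large numbers to freeze the variance---a slightly more elementary argument that sidesteps the need for a CLT-type result by leveraging the Gaussianity of the $h_{iD}$ directly.
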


\begin{proof}
See Appendix \ref{app_non_upp}.
\end{proof}

\begin{Remark}
In general, the upper bound in (\ref{capacityupper}) is loose, but
can be achieved when the outage
probability target $\epsilon$ tends to zero. We will show that this upper bound is
asymptotically tight in Section \ref{non_df}.
\end{Remark}

Note that in this paper whenever we refer to the low or high SNR regimes, we refer to the low or high values of $\gamma_0$ defined above.

\section{Achievable Rates for Large Relay Networks}

\subsection{AF Strategy}

In the previous section, we derived an upper bound of the
$\epsilon$-outage capacity for the large fading relay network. In
this subsection, we derive the $\epsilon$-outage rate with the AF
strategy, discuss the optimal power allocation between the source
and the relays, and evaluate the performance of the AF strategy in
both the low and high SNR regimes. Moreover, the rate loss due to
random attacks is evaluated.

\subsubsection{Achievable Rate}

We assume that the $i$-th relay node could estimate the average
power of the received signal: $\mathbb{E}\left[|r_i|^2 \big|
\rho_{Si} \right] =\alpha P \rho_{Si} + N_0$, and performs the
amplification according to
\begin{eqnarray} \label{t3}
t_i=\frac{\sqrt{(1-\alpha) P}}{ \sqrt{N(\alpha P\rho_{Si}+N_0)}}
r_i,
\end{eqnarray}
which ensures the sum power constraint satisfied across the relays:
$\mathbb{E}\left[\sum_{i=1}^{N}|t_i|^2\right]= (1- \alpha ) P$.
Hence, from (\ref{t1}), (\ref{t2}), and (\ref{t3}), the received
signal at the destination is given as
\begin{align} \label{receivesig}
y=&\underbrace{ \left( \sum_{i=1}^{L}
\sqrt{\frac{\alpha(1-\alpha)P^2\rho_{Si} \rho_{iD} }{N(\alpha P
\rho_{Si}+N_0)}} h_{Si}h_{iD}\right) }_{A} x +\underbrace{ \left(
\sum_{i=1}^{L} \sqrt{\frac{(1-\alpha)P \rho_{iD} }{N(\alpha P
\rho_{Si}+N_0)}}h_{iD} n_i\right) }_{B} +w.
\end{align}
From \cite{robbins}, we know that when $N \rightarrow \infty$, the
distributions of $A$ and $B$ are asymptotically given by
\begin{align}
&A \sim  \mathcal{CN} \left( 0, (1-p) \mathbb{E}\left[
\frac{\alpha(1-\alpha)P^2\rho_{Si} \rho_{iD} }{\alpha P
\rho_{Si}+N_0} \right] \right), \label{adis} \\
&B  \sim \mathcal{CN} \left( 0, (1-p) \mathbb{E}
\left[\frac{(1-\alpha)P \rho_{iD} }{\alpha P \rho_{Si}+N_0} N_0
\right] \right). \label{bdis}
\end{align}

\begin{Remark}
Since $h_{Si}$ and $h_{iD}$ are independent and of zero means, we
have $\mathbb{E}(AB)=\mathbb{E}(A)\mathbb{E}(B)=0$, which means
that $A$ is uncorrelated with $B$. Since $A$ and $B$ are
asymptotically complex Gaussian, they are independent of each
other. Therefore, from (\ref{receivesig}), the large fading relay
network under consideration is asymptotically equivalent to a
Rayleigh fading channel between the source and the destination
with a fading coefficient $A$ and an AWGN $B+w$.
\end{Remark}

Thus, the average received SNR at the destination can be written
as
\begin{eqnarray} \label{snraf}
\gamma_{\text{AF}}=\frac{ (1-p) \mathcal{A} } { 1+ (1- p)
\mathcal{B} },
\end{eqnarray}
where $\mathcal{A}$ and $\mathcal{B}$ are defined as
\begin{eqnarray}
\mathcal{A}=\mathbb{E}\left(
\frac{\gamma_{Si}\gamma_{iD}}{1+\gamma_{Si}} \right)
\label{A},~\mathcal{B} =\mathbb{E} \left ( \frac{\gamma_{iD}
}{1+\gamma_{Si}}\right ), \label{B}
\end{eqnarray}
with $\gamma_{Si}$ and $\gamma_{iD}$ the received SNRs of the
$i$-th source-to-relay and relay-to-destination link,
respectively: $\gamma_{Si}=\alpha \gamma_0 \rho_{Si}$; and
$\gamma_{iD}=(1-\alpha) \gamma_0 \rho_{iD}$. Therefore, with the
same argument as for the case of the $\epsilon$-outage upper bound, the $\epsilon$-outage rate with
the AF strategy is given by
\begin{eqnarray}\label{capacityaf}
R_{\text{AF}}=\frac{1}{2} \log \left( 1+\gamma_{\text{AF}}
\ln\left( \frac{1}{1-\epsilon} \right)\right),
\end{eqnarray}
where $\epsilon$ is the target outage probability.

\subsubsection{Effect of Random Attacks}
From (\ref{snraf}), (\ref{A}), and (\ref{capacityaf}), it is
observed that $R_{\text{AF}}$ is a decreasing function of $p$,
which means that the random attacks cause a certain amount of rate
loss compared with the attack-free case, i.e., when $p=0$. Based
on (\ref{capacityaf}), we next present some analysis on the attack
effects over both the low and high SNR regimes.

\begin{Proposition}
In the low SNR regime, there is a $p$-portion rate loss for the AF
strategy.
\end{Proposition}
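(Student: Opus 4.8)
The plan is to benchmark the AF $\epsilon$-outage rate against its attack-free counterpart $R_{\text{AF}}|_{p=0}$ and to show that the \emph{fractional} rate loss $\frac{R_{\text{AF}}|_{p=0}-R_{\text{AF}}}{R_{\text{AF}}|_{p=0}}$ converges to $p$ as $\gamma_0\to 0$, which is precisely what a ``$p$-portion rate loss'' asserts. Everything funnels through the leading-order behavior of $\gamma_{\text{AF}}$ in (\ref{snraf}), and hence through the two auxiliary quantities $\mathcal{A}$ and $\mathcal{B}$ defined in (\ref{A}).

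First I would pin down the low-SNR orders of $\mathcal{A}$ and $\mathcal{B}$. Since $\gamma_{Si}=\alpha\gamma_0\rho_{Si}$ and $\gamma_{iD}=(1-\alpha)\gamma_0\rho_{iD}$ both vanish as $\gamma_0\to 0$, I would Taylor-expand the integrands as $\frac{\gamma_{Si}\gamma_{iD}}{1+\gamma_{Si}}=\gamma_{Si}\gamma_{iD}\bigl(1+O(\gamma_0)\bigr)$ and $\frac{\gamma_{iD}}{1+\gamma_{Si}}=\gamma_{iD}\bigl(1+O(\gamma_0)\bigr)$. Taking expectations, which are finite because the dead-zone radius $s_0$ keeps $\rho_{Si}$ and $\rho_{iD}$ bounded, yields $\mathcal{A}=\alpha(1-\alpha)\gamma_0^2\,\mathbb{E}(\rho_{Si}\rho_{iD})+O(\gamma_0^3)$ and $\mathcal{B}=(1-\alpha)\gamma_0\,\mathbb{E}(\rho_{iD})+O(\gamma_0^2)$. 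The decisive structural fact is the order mismatch: $\mathcal{A}=O(\gamma_0^2)$ while $\mathcal{B}=O(\gamma_0)$.

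Next, substituting into (\ref{snraf}), the denominator obeys $1+(1-p)\mathcal{B}=1+O(\gamma_0)\to 1$, so that $\gamma_{\text{AF}}=(1-p)\mathcal{A}\bigl(1+O(\gamma_0)\bigr)=(1-p)\alpha(1-\alpha)\gamma_0^2\,\mathbb{E}(\rho_{Si}\rho_{iD})+o(\gamma_0^2)$. Applying the small-argument expansion $\log(1+x)=\frac{x}{\ln 2}\bigl(1+O(x)\bigr)$ to (\ref{capacityaf}) then gives $R_{\text{AF}}=\frac{\ln\!\left(\frac{1}{1-\epsilon}\right)}{2\ln 2}\,\gamma_{\text{AF}}\bigl(1+O(\gamma_{\text{AF}})\bigr)$. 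Because this prefactor is identical for the attacked and attack-free cases, the ratio $R_{\text{AF}}/R_{\text{AF}}|_{p=0}$ collapses at leading order to the ratio of the two received SNRs, which is simply $(1-p)$; hence the fractional rate loss tends to $p$, as claimed.

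The main obstacle I anticipate is keeping the two asymptotic regimes cleanly separated. The equivalent Rayleigh-fading model, and therefore the expressions (\ref{snraf})--(\ref{capacityaf}), already invoke the $N\to\infty$ limit established via \cite{robbins}; I must accordingly treat $\mathcal{A}$ and $\mathcal{B}$ as fixed deterministic functions of $\gamma_0$ and send $\gamma_0\to 0$ afterwards, rather than intertwining the two limits. A secondary technical point is justifying that the $O(\gamma_0)$ remainders in the integrands may be pulled outside the expectation; this follows from the boundedness of $\rho_{Si}$ and $\rho_{iD}$ guaranteed by the dead-zone assumption, so dominated convergence suffices and no delicate tail estimates are required.
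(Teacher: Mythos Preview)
Your proposal is correct and follows essentially the same route as the paper's proof: approximate $1+\gamma_{Si}\approx 1$ and $1+(1-p)\mathcal{B}\approx 1$ so that $\gamma_{\text{AF}}\approx(1-p)\,\mathbb{E}(\gamma_{Si}\gamma_{iD})$, then linearize the logarithm in (\ref{capacityaf}) to obtain $R_{\text{AF}}\approx(1-p)R_{\text{AF}}|_{p=0}$. Your version is simply more careful with the remainder terms and with the justification via the dead-zone bound, but the underlying argument is identical.
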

\begin{proof}
With low SNR values, we have $1+\gamma_{Si} \approx 1$ and
$1+\mathcal{B}\approx 1$; thus,
\begin{align}
R^{\text{Low}} \approx \frac{1}{2}(1-p) \mathbb{E} \left(
\gamma_{Si} \gamma_{iD} \right) \ln\left( \frac{1}{1-\epsilon}
\right) = (1-p)R_{\text{0}}^{\text{Low}},
\end{align}
where $R_{\text{0}}^{\text{Low}}=  \frac{1}{2} \mathbb{E} \left(
\gamma_{Si} \gamma_{iD} \right) \ln\left( \frac{1}{1-\epsilon}
\right) $ is the $\epsilon$-outage rate with the AF strategy at
$p=0$ in the low SNR regime.
\end{proof}

\begin{Remark}
In this case, random attacks cause a $p$-portion rate loss
compared to the attack-free case. Intuitively, since the random
attacks kill $p$-portion of the relays, the received signal power
at the destination is also $p$-portion of that in the attack-free
case. Accordingly, the rate is lost at the same proportion since we are in the
low SNR regime.
\end{Remark}

\begin{Proposition} \label{losshigh}
In the high SNR regime, there is a constant rate loss for the AF
strategy, and this constant is upper-bounded by $\frac{1}{2}\log
\left(\frac{1}{1-p}\right)$.
\end{Proposition}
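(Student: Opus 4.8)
The plan is to define the AF rate loss as the gap between its attack-free value and its value at a general failure probability, namely $\Delta R = R_{\text{AF}}\big|_{p=0}-R_{\text{AF}}$, and to show that in the high SNR regime this gap converges to a finite constant obeying the stated bound. The whole argument hinges on the high-SNR scaling of the two expectations $\mathcal{A}$ and $\mathcal{B}$ in (\ref{A}).

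First I would establish that scaling. Since $\gamma_{Si}=\alpha\gamma_0\rho_{Si}$ and $\gamma_{iD}=(1-\alpha)\gamma_0\rho_{iD}$ both grow linearly in $\gamma_0$, and since the dead-zone assumption keeps $\rho_{Si}$ bounded away from zero and $\rho_{iD}$ bounded above, letting $\gamma_0\to\infty$ gives $\frac{\gamma_{Si}}{1+\gamma_{Si}}\to 1$, so that $\mathcal{A}\sim(1-\alpha)\gamma_0\,\mathbb{E}(\rho_{iD})$ grows linearly in $\gamma_0$; in contrast $\frac{\gamma_{iD}}{1+\gamma_{Si}}\to\frac{(1-\alpha)\rho_{iD}}{\alpha\rho_{Si}}$, so that $\mathcal{B}$ converges to the finite constant $\mathbb{E}\!\left(\frac{(1-\alpha)\rho_{iD}}{\alpha\rho_{Si}}\right)$. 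The boundedness of $\mathcal{B}$ (guaranteed by the dead-zone keeping $\rho_{iD}/\rho_{Si}$ bounded) against the divergence of $\mathcal{A}$ is the essential structural fact.

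Next, because $\gamma_{\text{AF}}=\frac{(1-p)\mathcal{A}}{1+(1-p)\mathcal{B}}\to\infty$ in the high SNR regime, I would drop the additive $1$ inside the logarithm of (\ref{capacityaf}), giving $R_{\text{AF}}\approx\frac12\log\!\left(\gamma_{\text{AF}}\ln\frac{1}{1-\epsilon}\right)$. The logarithm factor $\ln\frac{1}{1-\epsilon}$ then cancels in the difference, leaving
\[
\Delta R=\frac12\log\frac{\gamma_{\text{AF}}|_{p=0}}{\gamma_{\text{AF}}}=\frac12\log\frac{1+(1-p)\mathcal{B}}{(1-p)(1+\mathcal{B})}.
\]
Here the linear-in-$\gamma_0$ factor $\mathcal{A}$ cancels in the ratio, so only the bounded quantity $\mathcal{B}$ survives; letting $\gamma_0\to\infty$ then shows $\Delta R$ tends to a finite constant, which establishes that the loss is indeed constant (rather than SNR-growing) in the high SNR regime.

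Finally, for the upper bound I would observe that $(1-p)\mathcal{B}\le\mathcal{B}$ since $\mathcal{B}\ge 0$ and $p\in[0,1)$, whence $\frac{1+(1-p)\mathcal{B}}{1+\mathcal{B}}\le 1$ and therefore $\frac{1+(1-p)\mathcal{B}}{(1-p)(1+\mathcal{B})}\le\frac{1}{1-p}$; taking $\frac12\log$ of both sides yields $\Delta R\le\frac12\log\frac{1}{1-p}$, with equality approached as $\mathcal{B}\to 0$. The main obstacle is the high-SNR step rather than this concluding inequality: one must justify that the additive constant in the logarithm is negligible and, crucially, confirm that $\mathcal{B}$ remains bounded while $\mathcal{A}$ diverges, since it is precisely this asymmetry that collapses the rate loss to a finite constant; once that scaling is in hand, the bound follows in a single line of algebra.
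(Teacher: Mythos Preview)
Your proposal is correct and follows essentially the same approach as the paper: both arguments use the high-SNR scaling $\mathcal{A}\sim\mathbb{E}(\gamma_{iD})$ and $\mathcal{B}\to\frac{1-\alpha}{\alpha}\mathbb{E}(\rho_{iD}/\rho_{Si})$, drop the additive $1$ in the logarithm, and then observe that the ratio $\frac{1+(1-p)\mathcal{B}}{(1-p)(1+\mathcal{B})}\le\frac{1}{1-p}$ because $(1-p)\mathcal{B}\le\mathcal{B}$. The only cosmetic difference is that the paper first substitutes the explicit high-SNR limit of $\mathcal{B}$ and writes the rate loss as $\frac{1}{2}\log\frac{1}{1-p}-C_1$ with $C_1\ge 0$, whereas you keep $\mathcal{B}$ abstract and bound the ratio directly; the content is identical.
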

\begin{proof}
Since $\mathcal{A}\sim \mathbb{E}\left( \gamma_{iD}\right)$ and
$\mathcal{B} \sim \mathbb{E}
\left(\frac{\gamma_{iD}}{\gamma_{Si}}\right)$ in the high SNR
regime, we have
\begin{align}
R^{\text{High}} &\sim \frac{1}{2}\log \left(
\frac{(1-p)(1-\alpha)\gamma_0\mathbb{E}\left( \rho_{iD}\right)
}{1+(1-p) \frac{1-\alpha}{\alpha} \mathbb{E}\left(
\frac{\rho_{iD}}{\rho_{Si}} \right) } \ln\left(
\frac{1}{1-\epsilon} \right) \right) \nonumber \\
&= R_{\text{0}}^{\text{High}} -\frac{1}{2}\log
\left(\frac{1}{1-p}\right)+ \underbrace{\frac{1}{2}\log \left(
\frac{1+\frac{1-\alpha}{\alpha}\mathbb{E}\left(
\frac{\rho_{iD}}{\rho_{Si}}\right)}{1+(1-p)\frac{1-\alpha}{\alpha}
\mathbb{E}\left( \frac{\rho_{iD}}{\rho_{Si}}\right)} \right) }_{C_1} \label{non_af_high_1} \\
& \geq R_{\text{0}}^{\text{High}} -\frac{1}{2}\log
\left(\frac{1}{1-p}\right), \label{non_af_high_2}
\end{align}
where $R_{\text{0}}^{\text{High}}= \frac{1}{2}\log \left(
\frac{(1-\alpha)\gamma_0\mathbb{E}\left( \rho_{iD}\right) }{1+
\frac{1-\alpha}{\alpha} \mathbb{E}\left(
\frac{\rho_{iD}}{\rho_{Si}} \right) } \ln\left(
\frac{1}{1-\epsilon} \right) \right) $ is the $ \epsilon$-outage
rate with the AF strategy at $p=0$ in the high SNR regime. From
(\ref{non_af_high_2}), it is observed that the upper bound of the
rate loss is given as $R_{\text{0}}^{\text{High}} -
R^{\text{High}} = \frac{1}{2}\log \left(\frac{1}{1-p}\right)-C_1
\leq \frac{1}{2}\log \left(\frac{1}{1-p}\right)$, and the equality
holds only when $C_1=0$, i.e., $p=0$.
\end{proof}

\subsubsection{Power Allocation to the Source}

In this subsection, we mainly discuss the power allocation
strategy to maximize the achievable rate with the AF strategy. In
order to maximize (\ref{capacityaf}), we only need to maximize
(\ref{snraf}). As such, we have the following results.
\begin{Proposition} \label{opt_alpha_non_af}
The received SNR defined in (\ref{snraf}) is a quasiconcave
function over $\alpha$.
\end{Proposition}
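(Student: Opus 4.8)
The plan is to establish quasiconcavity directly from the superlevel-set characterization, by exhibiting $\gamma_{\text{AF}}$ in (\ref{snraf}) as a ratio of a nonnegative concave function of $\alpha$ over a positive convex function of $\alpha$ on the domain $\alpha\in[0,1)$. I would invoke the standard fact that if $f(\alpha)\ge 0$ is concave and $h(\alpha)>0$ is convex, then $f/h$ is quasiconcave: for any threshold $t\ge 0$ the superlevel set $\{\alpha:\ f(\alpha)/h(\alpha)\ge t\}=\{\alpha:\ f(\alpha)-t\,h(\alpha)\ge 0\}$ is the superlevel set of the concave function $f-t\,h$ and hence an interval, while for $t<0$ it is the whole domain since $f/h\ge 0>t$; convex superlevel sets are exactly the definition of quasiconcavity. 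It therefore suffices to take $f=(1-p)\mathcal{A}$ and $h=1+(1-p)\mathcal{B}$ and verify concavity of $\mathcal{A}$ and convexity of $\mathcal{B}$.

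First I would substitute $\gamma_{Si}=\alpha\gamma_0\rho_{Si}$ and $\gamma_{iD}=(1-\alpha)\gamma_0\rho_{iD}$ into (\ref{A}) and argue at the level of the integrands, since expectation preserves both convexity and concavity. For the denominator, fix a realization of $(\rho_{Si},\rho_{iD})$ and set
\[
g(\alpha)=\frac{\gamma_{iD}}{1+\gamma_{Si}}=\frac{(1-\alpha)\gamma_0\rho_{iD}}{1+\alpha\gamma_0\rho_{Si}}.
\]
A direct differentiation gives $g'(\alpha)=-\gamma_0\rho_{iD}(1+\gamma_0\rho_{Si})/(1+\alpha\gamma_0\rho_{Si})^2$ and $g''(\alpha)=2\gamma_0^2\rho_{Si}\rho_{iD}(1+\gamma_0\rho_{Si})/(1+\alpha\gamma_0\rho_{Si})^3>0$, so every integrand is convex in $\alpha$; taking the expectation, $\mathcal{B}$ is convex, whence $h=1+(1-p)\mathcal{B}$ is convex and, being at least $1$, strictly positive on $[0,1)$.

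For the numerator I would exploit the algebraic identity obtained by summing the two integrands,
\[
\frac{\gamma_{Si}\gamma_{iD}}{1+\gamma_{Si}}+\frac{\gamma_{iD}}{1+\gamma_{Si}}=\gamma_{iD},
\]
which upon taking expectations yields $\mathcal{A}+\mathcal{B}=\mathbb{E}(\gamma_{iD})=(1-\alpha)\gamma_0\mathbb{E}(\rho_{iD})$, an affine (hence concave) function of $\alpha$. Since $\mathcal{B}$ is convex, $\mathcal{A}=\mathbb{E}(\gamma_{iD})-\mathcal{B}$ is concave, and $\mathcal{A}\ge 0$ by construction, so $f=(1-p)\mathcal{A}$ is nonnegative and concave. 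Applying the quasiconcavity criterion to $f/h$ then delivers the claim.

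The two derivative computations and the interchange of expectation with convexity are routine; the step that genuinely carries the argument is the identity $\mathcal{A}+\mathcal{B}=\mathbb{E}(\gamma_{iD})$, which converts the otherwise awkward numerator $\mathcal{A}$ into the difference of an affine function and a convex function, thereby revealing its concavity without a messy direct second-derivative analysis. I expect the main subtlety to be bookkeeping rather than depth: ensuring that positivity of $h$ and nonnegativity of $f$ hold \emph{uniformly} on $[0,1)$ (all following from $0\le p\le 1$, $\rho_{Si},\rho_{iD}>0$, and $1-\alpha>0$), and confirming that the concave-over-positive-convex ratio lemma is invoked for all relevant thresholds $t$, including the trivial case $t<0$.
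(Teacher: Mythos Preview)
Your argument is correct and follows the same overall architecture as the paper's proof: show that $\mathcal{A}$ is concave and $1+(1-p)\mathcal{B}$ is positive convex in $\alpha$, then conclude that their ratio is quasiconcave. The paper establishes concavity of $\mathcal{A}$ by a partial-fraction decomposition of the integrand $\frac{\alpha(1-\alpha)\gamma_0^2\rho_{Si}\rho_{iD}}{1+\alpha\gamma_0\rho_{Si}}$ into an affine part plus a concave term $c_3/(1+\alpha\gamma_0\rho_{Si})$ with $c_3<0$, and then cites the concave-over-convex quasiconvexity example from Boyd--Vandenberghe applied to the reciprocal $(1+\mathcal{B})/\mathcal{A}$.

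Your route differs in one pleasant detail: rather than decomposing the numerator integrand directly, you observe the algebraic identity $\mathcal{A}+\mathcal{B}=\mathbb{E}(\gamma_{iD})=(1-\alpha)\gamma_0\mathbb{E}(\rho_{iD})$, so that concavity of $\mathcal{A}$ follows immediately as ``affine minus convex'' once $\mathcal{B}$ is known to be convex. This is a cleaner bookkeeping device and avoids tracking the partial-fraction constants. Your direct superlevel-set verification of quasiconcavity (the set $\{f-th\ge 0\}$ is convex for each $t\ge 0$) is equivalent to the paper's appeal to the Boyd--Vandenberghe example, and has the minor advantage that it only requires $f\ge 0$ rather than $f>0$, so the endpoint $\alpha=0$ (where $\mathcal{A}=0$) is handled without a separate remark.
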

\begin{proof}
See Appendix \ref{app_non_af_power}.
\end{proof}

Since (\ref{snraf}) is quasiconcave in $\alpha$, we can apply
efficient convex optimization techniques to obtain the optimal
$\alpha$, e.g., bisection search combined with the interior-point
method \cite{boyd}. In the low and high SNR regimes, we have the following results.
\begin{Proposition}
When $\gamma_0\rightarrow 0$, we have $1+\gamma_{Si}\approx 1$,
$1+\mathcal{B}\approx 1$; and from (\ref{snraf}), the received SNR
is asymptotically given by
\begin{align*}
\gamma_{\text{AF}}\approx \mathbb{E}(\gamma_{Si}\gamma_{iD}) =
\alpha (1-\alpha) \mathbb{E}(\rho_{Si}\rho_{iD}),
\end{align*}
which implies $\alpha_{\text{opt}}=0.5$.
\end{Proposition}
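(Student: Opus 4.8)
The plan is to analyze the low-SNR limit of the received SNR expression $\gamma_{\text{AF}}$ in (\ref{snraf}) and then optimize the resulting simplified objective over the power-allocation fraction $\alpha$. The claim has two parts: first, establishing the asymptotic form $\gamma_{\text{AF}}\approx\mathbb{E}(\gamma_{Si}\gamma_{iD})=\alpha(1-\alpha)\mathbb{E}(\rho_{Si}\rho_{iD})$, and second, showing this is maximized at $\alpha_{\text{opt}}=0.5$.

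**First I would** justify the approximations $1+\gamma_{Si}\approx 1$ and $1+(1-p)\mathcal{B}\approx 1$ in the low-SNR regime. Since $\gamma_{Si}=\alpha\gamma_0\rho_{Si}$ and $\gamma_0=P/N_0\to 0$, the term $\gamma_{Si}$ vanishes for every fixed $\rho_{Si}$ (recall that the dead-zone assumption keeps $\rho_{Si}$ bounded), so $1+\gamma_{Si}\to 1$. Consequently $\mathcal{B}=\mathbb{E}\bigl(\gamma_{iD}/(1+\gamma_{Si})\bigr)\to\mathbb{E}(\gamma_{iD})$, which is itself $\mathcal{O}(\gamma_0)$ and hence negligible compared to $1$, giving $1+(1-p)\mathcal{B}\approx 1$. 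Substituting these into (\ref{snraf}) collapses the denominator to unity and reduces $\mathcal{A}=\mathbb{E}\bigl(\gamma_{Si}\gamma_{iD}/(1+\gamma_{Si})\bigr)$ to $\mathbb{E}(\gamma_{Si}\gamma_{iD})$, so that $\gamma_{\text{AF}}\approx(1-p)\mathbb{E}(\gamma_{Si}\gamma_{iD})$.

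**Then I would** substitute the definitions $\gamma_{Si}=\alpha\gamma_0\rho_{Si}$ and $\gamma_{iD}=(1-\alpha)\gamma_0\rho_{iD}$ to obtain $\mathbb{E}(\gamma_{Si}\gamma_{iD})=\alpha(1-\alpha)\gamma_0^2\,\mathbb{E}(\rho_{Si}\rho_{iD})$. Since $\rho_{Si}$ and $\rho_{iD}$ depend on the same relay location $\mathbf{s}_i$, the expectation $\mathbb{E}(\rho_{Si}\rho_{iD})=\int_{\mathbf{s}\in\mathcal{S}}\rho_{Si}(\mathbf{s})\rho_{iD}(\mathbf{s})p(\mathbf{s})\,d\mathbf{s}$ is a fixed positive constant independent of $\alpha$. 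The only $\alpha$-dependence is therefore the prefactor $\alpha(1-\alpha)$.

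**The main step** is then the elementary maximization of $\alpha(1-\alpha)$ over $\alpha\in[0,1)$; this parabola attains its maximum at $\alpha=1/2$, giving $\alpha_{\text{opt}}=0.5$. I do not expect any genuine obstacle here—the entire proposition is driven by the linearization of the fading-channel SNR in the low-SNR limit, and the only subtlety worth flagging is making the approximation rigorous by confirming that all the relevant expectations stay bounded (guaranteed by the dead-zone assumption that keeps $\rho_{Si}$ and $\rho_{iD}$ finite), so that the neglected higher-order terms are genuinely $o(\gamma_0^2)$ and do not shift the optimizer away from $1/2$.
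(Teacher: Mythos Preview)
Your proposal is correct and follows exactly the reasoning the paper intends; indeed, the paper does not give a separate proof for this proposition at all, since the entire argument is already contained in the statement (the approximations $1+\gamma_{Si}\approx 1$, $1+\mathcal{B}\approx 1$ followed by reading off the $\alpha(1-\alpha)$ prefactor). Your write-up is simply a more careful expansion of the same chain of approximations, and you even correctly retain the $(1-p)$ and $\gamma_0^2$ factors that the displayed formula in the proposition silently drops---neither of which affects the conclusion $\alpha_{\text{opt}}=0.5$.
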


\begin{Remark}
In the low SNR regime, compared with the noise power at the
destination, the amplified noise from the relays can be neglected,
i.e., $\mathcal{B} \approx 0$, and the received SNR can be
approximated as the product of the two SNRs over the two hops.
Obviously, the optimal point is achieved when the powers allocated
to the two hops are the same.
\end{Remark}

\begin{Proposition} \label{nonco_af_high_optpower}
When $\gamma_0 \rightarrow \infty$, we have
$1+\gamma_{Si}\sim \gamma_{Si}$; and from (\ref{snraf}), the
received SNR is asymptotically given by
\begin{align} \label{snrafhigh}
\gamma_{\text{AF}} \approx \frac{(1-p)(1-\alpha)\gamma_0
\mathbb{E}(\rho_{iD})}{1+(1-p)\frac{1-\alpha}{\alpha}\mathbb{E}
\left(\frac{\rho_{iD}}{\rho_{Si}}\right)}.
\end{align}
By letting the derivative of (\ref{snrafhigh}) be zero, it is
observed that only one solution satisfies the condition $\alpha
\in [0,1)$, which is given as
\begin{align} \label{opthigh}
\alpha_{\text{opt}}= \frac{\sqrt{(1-p)\mathbb{E}\left(
\frac{\rho_{iD}}{\rho_{Si}}
\right)}}{1+\sqrt{(1-p)\mathbb{E}\left(
\frac{\rho_{iD}}{\rho_{Si}} \right)}}.
\end{align}
Taking the second-order derivative of (\ref{snrafhigh}), we have
\begin{align*}
\frac{d^2\gamma_{\text{AF}}}{d\alpha^2}= \frac{ -(1-p)^2
\mathbb{E} \left( \frac{\rho_{iD}}{\rho_{Si}} \right)
\gamma_0}{\left( (1-p)\mathbb{E}\left(
\frac{\rho_{iD}}{\rho_{Si}} \right) +\left(1-(1-p)\mathbb{E}\left(
\frac{\rho_{iD}}{\rho_{Si}} \right)\right)\alpha  \right)^2}.
\end{align*}
Since $\frac{d^2\gamma_{\text{AF}}}{d\alpha^2}\leq 0$ such that
$\gamma_{\text{AF}}$ in (\ref{snrafhigh}) is concave, we conclude
that the power allocation factor given in (\ref{opthigh}) is
indeed the optimal solution to maximize $\gamma_{\text{AF}}$.
\end{Proposition}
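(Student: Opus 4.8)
The plan is to begin from the exact received-SNR expression in (\ref{snraf}), push the two expectations $\mathcal{A}$ and $\mathcal{B}$ to their $\gamma_0\to\infty$ limits to recover (\ref{snrafhigh}), and then solve a single-variable optimization over $\alpha$.

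First I would establish the high-SNR forms of $\mathcal{A}$ and $\mathcal{B}$. Writing $\mathcal{A}=\mathbb{E}\left(\gamma_{iD}\frac{\gamma_{Si}}{1+\gamma_{Si}}\right)$ and noting $\gamma_{Si}=\alpha\gamma_0\rho_{Si}\to\infty$, the factor $\frac{\gamma_{Si}}{1+\gamma_{Si}}\to 1$, so $\mathcal{A}\to\mathbb{E}(\gamma_{iD})=(1-\alpha)\gamma_0\mathbb{E}(\rho_{iD})$; likewise $\mathcal{B}=\mathbb{E}\left(\frac{\gamma_{iD}}{1+\gamma_{Si}}\right)\to\mathbb{E}\left(\frac{\gamma_{iD}}{\gamma_{Si}}\right)=\frac{1-\alpha}{\alpha}\mathbb{E}\left(\frac{\rho_{iD}}{\rho_{Si}}\right)$. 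Substituting into (\ref{snraf}) gives exactly (\ref{snrafhigh}).

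Next I would optimize (\ref{snrafhigh}) over $\alpha\in[0,1)$. Abbreviating $k=(1-p)\mathbb{E}\left(\frac{\rho_{iD}}{\rho_{Si}}\right)$ and clearing the inner fraction, $\gamma_{\text{AF}}$ becomes a positive constant times $\frac{\alpha(1-\alpha)}{k+(1-k)\alpha}$. Setting the numerator of its derivative to zero collapses to the quadratic $(1-k)\alpha^2+2k\alpha-k=0$, whose roots are $\alpha=\frac{-k\pm\sqrt{k}}{1-k}$. After simplification the positive root equals $\frac{\sqrt{k}}{1+\sqrt{k}}$, which always lies in $(0,1)$, whereas the other root falls outside $[0,1)$; this is precisely (\ref{opthigh}).

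Finally I would confirm the stationary point is the maximizer by the second-derivative test: the displayed $\frac{d^2\gamma_{\text{AF}}}{d\alpha^2}$ has the negative numerator $-(1-p)^2\mathbb{E}\left(\frac{\rho_{iD}}{\rho_{Si}}\right)\gamma_0$ over a perfect square, so $\gamma_{\text{AF}}$ is concave on $[0,1)$ and the unique feasible stationary point is the global maximum. The only genuinely delicate step is the first: interchanging the limit $\gamma_0\to\infty$ with the expectation over node positions. This is legitimate here because the dead-zone assumption ($\text{dist}(\mathbf{s}_S,\mathcal{S})\ge s_0$, $\text{dist}(\mathbf{s}_D,\mathcal{S})\ge s_0$) keeps $\rho_{Si}$ and $\rho_{iD}$ bounded, so the integrands are uniformly dominated and dominated convergence applies; the remaining algebra is routine.
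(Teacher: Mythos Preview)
Your proposal is correct and follows essentially the same path as the paper: the proposition in the paper is stated with its derivation inline (high-SNR limits of $\mathcal{A}$ and $\mathcal{B}$, first-order condition yielding the unique feasible root, second-derivative sign check), and your write-up mirrors each of those steps with the same substitution $k=(1-p)\mathbb{E}(\rho_{iD}/\rho_{Si})$. Your added remark that the dead-zone assumption bounds $\rho_{Si},\rho_{iD}$ and hence licenses dominated convergence for the limit-expectation interchange is a welcome bit of rigor that the paper leaves implicit.
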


\begin{Remark}
From (\ref{opthigh}), we see that the optimal power allocation
factor in the high SNR regime is determined by the attack
probability $p$ and the network topology parameter
$\mathbb{E}\left( \frac{\rho_{iD}}{\rho_{Si}}\right)$. Moreover,
for a given relay network, the optimal $\alpha$ is a decreasing
function of $p$ when SNR is high, i.e., when attacks are more
likely, \emph{more power should be allocated to the relays}.
Intuitively, for larger $p$ values, more relays are prone to die, and the
received SNR in the second hop is reduced. Therefore, in order to
balance these two hops, we should allocate more power to the
second hop.
\end{Remark}

\subsubsection{Performance Evaluation}

For a general $\gamma_0$ value, the AF strategy cannot achieve the MAC
cut-set upper bound. However, in the high and low SNR regimes, we
have the following results.
\begin{Proposition} \label{afachie}
When $\gamma_0$ goes to infinity, the $\epsilon$-outage rate with the AF
strategy and the MAC cut-set bound have the following asymptotic
relationship:
\begin{align}
R_{\text{AF}} \sim C_{\text{upper}} - \mathcal{O}(1).
\end{align}
\end{Proposition}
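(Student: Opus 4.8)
The plan is to reduce everything to the high-SNR behaviour of $\gamma_{\text{AF}}$, which is already half-done in Proposition~\ref{losshigh}. I would start from the limits $\mathcal{A}\sim\mathbb{E}(\gamma_{iD})$ and $\mathcal{B}\sim\mathbb{E}\!\left(\gamma_{iD}/\gamma_{Si}\right)$ as $\gamma_0\to\infty$ established there, and substitute them into the definition (\ref{snraf}) of $\gamma_{\text{AF}}$. Using $\gamma_{iD}=(1-\alpha)\gamma_0\rho_{iD}$ together with $\gamma_{iD}/\gamma_{Si}=\frac{1-\alpha}{\alpha}\,\rho_{iD}/\rho_{Si}$ (which is free of $\gamma_0$), this reproduces the asymptotic form (\ref{snrafhigh}). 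The crucial observation is that its numerator is exactly $\gamma_{\text{upper}}=(1-p)(1-\alpha)\gamma_0\mathbb{E}(\rho_{iD})$ from (\ref{snrupper}), while its denominator $K=1+(1-p)\frac{1-\alpha}{\alpha}\mathbb{E}\!\left(\rho_{iD}/\rho_{Si}\right)$ is a finite positive constant independent of $\gamma_0$. Hence $\gamma_{\text{AF}}\sim\gamma_{\text{upper}}/K$, i.e. $\gamma_{\text{upper}}/\gamma_{\text{AF}}\to K$.

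Next I would pass to the rate expressions. Both $\gamma_{\text{upper}}$ and $\gamma_{\text{AF}}$ grow linearly in $\gamma_0$, so the additive $1$ inside the logarithms of (\ref{capacityupper}) and (\ref{capacityaf}) is asymptotically negligible, giving
\[
C_{\text{upper}}\sim\tfrac{1}{2}\log\!\left(\gamma_{\text{upper}}\ln\tfrac{1}{1-\epsilon}\right),\qquad R_{\text{AF}}\sim\tfrac{1}{2}\log\!\left(\gamma_{\text{AF}}\ln\tfrac{1}{1-\epsilon}\right).
\]
Subtracting and cancelling the common term $\tfrac{1}{2}\log\!\left(\gamma_{\text{upper}}\ln\tfrac{1}{1-\epsilon}\right)$ leaves $C_{\text{upper}}-R_{\text{AF}}\to\tfrac{1}{2}\log K$. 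Since $K$ is a bounded positive constant, this limiting gap is $\mathcal{O}(1)$, which is exactly the claimed relation $R_{\text{AF}}\sim C_{\text{upper}}-\mathcal{O}(1)$; note that $\tfrac{1}{2}\log K$ coincides with the constant rate loss already identified in the high-SNR analysis, consistent with both quantities scaling as $\mathcal{O}(\log\gamma_0)$.

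The only delicate point is the passage from the multiplicative relation $\gamma_{\text{AF}}\sim\gamma_{\text{upper}}/K$ to the additive statement about the rate difference. This works because continuity of the logarithm applied to $\gamma_{\text{upper}}/\gamma_{\text{AF}}\to K$ already yields $\log(\gamma_{\text{upper}}/\gamma_{\text{AF}})\to\log K$, so the error terms hidden in the $\sim$ relations for $\mathcal{A}$ and $\mathcal{B}$ are washed out without needing any uniformity beyond the stated limits. The one thing I would state carefully is that $\alpha$ must be held at a fixed interior value (or at the $\gamma_0$-independent optimizer $\alpha_{\text{opt}}$ of (\ref{opthigh})), so that $K$ is a genuine constant and does not drift with $\gamma_0$; granting this, the argument is routine.
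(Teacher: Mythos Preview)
Your proposal is correct and follows essentially the same route as the paper's own proof: both start from the high-SNR limits $\mathcal{A}\sim\mathbb{E}(\gamma_{iD})$ and $\mathcal{B}\sim\mathbb{E}(\gamma_{iD}/\gamma_{Si})$, substitute into (\ref{snraf}) to get (\ref{snrafhigh}), and then split the logarithm into the upper-bound term plus a $\gamma_0$-free constant. The only cosmetic difference is bookkeeping of the factor $(1-\alpha)$: the paper absorbs it into the gap, writing the constant as $\tfrac{1}{2}\log\!\left(\tfrac{1}{1-\alpha}+\tfrac{1-p}{\alpha}\mathbb{E}(\rho_{iD}/\rho_{Si})\right)$ (in effect comparing against $C_{\text{upper}}$ at $\alpha=0$), whereas you keep $(1-\alpha)$ inside $\gamma_{\text{upper}}$ and obtain $\tfrac{1}{2}\log K$ with $K=1+(1-p)\tfrac{1-\alpha}{\alpha}\mathbb{E}(\rho_{iD}/\rho_{Si})$; either way the gap is $\mathcal{O}(1)$. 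One small inaccuracy: your remark that $\tfrac{1}{2}\log K$ ``coincides with the constant rate loss already identified'' in Proposition~\ref{losshigh} is off---that proposition quantifies the attack-induced loss (the $p>0$ vs.\ $p=0$ comparison), not the AF-to-upper-bound gap---but this is a side comment and does not affect your argument.
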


\begin{proof}
When $\gamma_0 \rightarrow \infty$, we have $1+\gamma_{Si} \sim
\gamma_{Si}$. Hence, we have $\mathcal{A} \sim
\mathbb{E}(\gamma_{iD}),~\text{and}~\mathcal{B} \sim \mathbb{E}
\left( \frac{\gamma_{iD}}{\gamma_{Si}} \right)$. Thus, the
achievable rate with the AF strategy can be approximated as
\begin{align*}
R_{\text{AF}} \sim & \frac{1}{2} \log \left(
\frac{(1-p)\mathcal{A}}{1+ (1-p)\mathcal{B}} \ln\left(
\frac{1}{1-\epsilon} \right)\right) \\
\sim &\frac{1}{2} \log\left(
\frac{(1-p)\mathbb{E}(\gamma_{iD})}{1+ (1-p) \mathbb{E} \left(
\frac{\gamma_{iD}}{\gamma_{Si}} \right)}\ln\left(
\frac{1}{1-\epsilon} \right) \right) \\
=&\frac{1}{2}\log \left(
(1-p)\gamma_0\mathbb{E}(\rho_{iD})\ln\left( \frac{1}{1-\epsilon}
\right) \right)  -\frac{1}{2}\log \left(
\frac{1}{1-\alpha}+\frac{1-p}{\alpha}
\mathbb{E} \left( \frac{\rho_{iD}}{\rho_{Si}} \right)\right) \\
=&C_{\text{upper}} - \mathcal{O}(1),
\end{align*}
where $\alpha$ is chosen to minimize the achievable rate loss as
in (\ref{opthigh}). Therefore, the proposition follows.
\end{proof}

\begin{Remark}
The gap $\frac{1}{2}\log \left( \frac{1}
{1-\alpha}+\frac{1-p}{\alpha} \mathbb{E} \left(
\frac{\rho_{iD}}{\rho_{Si}} \right)\right)$ is independent of $\gamma_0$,
and is determined by the power allocation factor $\alpha$, the
attack probability $p$, and the network topology. Here,
$\mathbb{E} \left( \frac{\rho_{iD}}{\rho_{Si}}\right)$ is a
parameter determined by the topology of the network.
\end{Remark}

\begin{Proposition}
In the low SNR regime, the $\epsilon$-outage achievable rate of
the AF strategy demonstrates the following asymptotic property
\begin{align}
\mathop {\lim }\limits_{\scriptstyle {\gamma_0} \to 0 \atop
\scriptstyle \epsilon  \to 0} \frac{R_{\text{AF}}}{\epsilon
\gamma_0} = 0.
\end{align}
\end{Proposition}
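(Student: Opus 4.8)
The plan is to show that $R_{\text{AF}}$ vanishes strictly faster than $\epsilon\gamma_0$ in the joint limit by producing an explicit upper bound that itself decays like $\gamma_0$. The crucial observation, inherited from the low-SNR analysis above, is that $\gamma_{\text{AF}}$ scales \emph{quadratically} in $\gamma_0$, since $\gamma_{Si}\gamma_{iD}=\alpha(1-\alpha)\gamma_0^2\rho_{Si}\rho_{iD}$. This supplies one extra factor of $\gamma_0$ beyond the normalization $\epsilon\gamma_0$, which is exactly what drives the ratio to zero.

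First I would linearize the outer logarithm in (\ref{capacityaf}). Using the elementary inequality $\ln(1+x)\le x$, equivalently $\log(1+x)\le x/\ln 2$ for all $x\ge 0$, I get
\[
R_{\text{AF}}\le \frac{1}{2\ln 2}\,\gamma_{\text{AF}}\,\ln\!\left(\frac{1}{1-\epsilon}\right).
\]
Next I would bound $\gamma_{\text{AF}}$ from above. Since $1+(1-p)\mathcal{B}\ge 1$, expression (\ref{snraf}) gives $\gamma_{\text{AF}}\le (1-p)\mathcal{A}$, and since $\frac{\gamma_{Si}\gamma_{iD}}{1+\gamma_{Si}}\le \gamma_{Si}\gamma_{iD}$, definition (\ref{A}) yields
\[
\mathcal{A}\le \mathbb{E}(\gamma_{Si}\gamma_{iD})=\alpha(1-\alpha)\gamma_0^2\,\mathbb{E}(\rho_{Si}\rho_{iD}).
\]
Writing $K:=(1-p)\alpha(1-\alpha)\mathbb{E}(\rho_{Si}\rho_{iD})$, a finite constant independent of $\gamma_0$ and $\epsilon$, this gives $\gamma_{\text{AF}}\le K\gamma_0^2$. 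Finally I would control the outage factor via $\ln\!\left(\frac{1}{1-\epsilon}\right)=-\ln(1-\epsilon)\le \frac{\epsilon}{1-\epsilon}$ for $\epsilon\in[0,1)$.

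Combining the three bounds, for all sufficiently small $\gamma_0$ and $\epsilon$,
\[
0\le \frac{R_{\text{AF}}}{\epsilon\gamma_0}\le \frac{1}{2\ln 2}\cdot\frac{K\gamma_0^2}{\epsilon\gamma_0}\cdot\frac{\epsilon}{1-\epsilon}=\frac{K}{2\ln 2}\cdot\frac{\gamma_0}{1-\epsilon}.
\]
Letting $\gamma_0\to 0$ and $\epsilon\to 0$, the right-hand side tends to $0$; since $R_{\text{AF}}\ge 0$, the squeeze theorem delivers the claim.

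The main obstacle is one of rigor rather than of ideas: the statement is a genuine \emph{double} limit, so one cannot simply substitute the low-SNR and small-$\epsilon$ approximations in sequence. The upper-bound route above circumvents this by producing a single estimate valid throughout a neighborhood of the origin in $(\gamma_0,\epsilon)$ whose limit is manifestly zero. The only point worth verifying is that $K$ is finite and independent of both $\gamma_0$ and $\epsilon$, which holds because $\alpha(1-\alpha)\le 1/4$ and $\mathbb{E}(\rho_{Si}\rho_{iD})$ is a fixed topology-dependent quantity kept finite by the dead-zone assumption.
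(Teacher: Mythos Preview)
Your argument is correct and rests on the same underlying observation as the paper: at low SNR, $\gamma_{\text{AF}}$ scales like $\gamma_0^2$ because $\gamma_{Si}\gamma_{iD}=\alpha(1-\alpha)\gamma_0^2\rho_{Si}\rho_{iD}$, and this extra factor of $\gamma_0$ is what kills the ratio.

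The execution, however, differs. The paper works with the reciprocal $\epsilon\gamma_0/R_{\text{AF}}$ and substitutes the first-order low-SNR approximations $\epsilon\approx (2^{2R}-1)/\gamma_{\text{AF}}\approx 2\ln 2\,R/\gamma_{\text{AF}}$ and $\gamma_{\text{AF}}\approx (1-p)\mathbb{E}(\gamma_{Si}\gamma_{iD})$ directly inside the limit, obtaining $\epsilon\gamma_0/R_{\text{AF}}\to\infty$. Your route instead produces a chain of inequalities, $\log(1+x)\le x/\ln 2$, $1+(1-p)\mathcal{B}\ge 1$, $\gamma_{Si}/(1+\gamma_{Si})\le 1$, and $-\ln(1-\epsilon)\le \epsilon/(1-\epsilon)$, that are valid for \emph{all} $(\gamma_0,\epsilon)$ in a neighborhood of the origin, and then squeezes. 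What you gain is exactly the point you flag at the end: the double limit is handled rigorously without having to justify that the sequential approximations can be composed, whereas the paper's computation is a heuristic substitution that would need additional care to be made airtight. What the paper's version buys is brevity and a slightly more transparent link back to the outage formula itself. Both are driven by the identical $\gamma_0^2$ scaling, so neither is conceptually stronger; yours is simply cleaner analytically.
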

\begin{proof}
For the outage probability, we have
\begin{align*}
\mathop {\lim }\limits_{\scriptstyle {\gamma_0} \to 0 \atop
\scriptstyle \varepsilon  \to 0}
\frac{{\epsilon}}{\frac{{R}}{{{\gamma_0}}}} &=  \mathop {\lim
}\limits_{\scriptstyle {\gamma_0} \to 0 \atop \scriptstyle
\varepsilon  \to 0} \frac{\Pr \left\{ \frac{1}{2} \log \left( 1+
\gamma_{\text{AF}}
|h|^2 \right) < R \right\}} {\frac{R}{\gamma_0}} \\
&= \mathop {\lim }\limits_{\scriptstyle {\gamma_0} \to 0 \atop
\scriptstyle \varepsilon  \to 0} \frac{ \frac{2\ln 2
R}{\gamma_{\text{AF}}} }{\frac{R}{\gamma_0}} = \mathop {\lim
}\limits_{\scriptstyle {\gamma_0} \to 0 \atop \scriptstyle
\varepsilon  \to 0}\frac{ \frac{2\ln2R}{(1-p) \mathbb{E} \left(
\gamma_{Si}\gamma_{iD}\right) } } { \frac{R}{\gamma_0}} \\
&= \mathop {\lim }\limits_{\scriptstyle {\gamma_0} \to 0 \atop
\scriptstyle \varepsilon  \to 0}\frac{ \frac{2\ln2R}{(1-p) \alpha
(1-\alpha)\gamma_0^2 \mathbb{E} \left( \rho_{Si}\rho_{iD}\right)
} } {\frac{R}{\gamma_0}} =\infty.
\end{align*}
Therefore, the proposition follows.
\end{proof}

\begin{Remark}
This proposition shows that the $\epsilon$-outage rate of the AF
strategy decreases at a higher speed than that of $\gamma_0$ in the low SNR
regime. The reason why the AF strategy performs so bad in the low
SNR regime is that in this case the relays mainly forward noises,
but not the signal, which is true regardless of the fact that we have physical node failures or not.
\end{Remark}

\subsection{DF Strategy} \label{non_df}

In this subsection, we focus on the $\epsilon$-outage rate of the
DF strategy. With Rayleigh fading in the source-to-relay links, the
probability that the relay cannot decode the source message
successfully is strictly positive. As such, in addition to the
probability of physical random attacks, we should also take into
account the fading-related decoding failures at the relay.

\subsubsection{Achievable Rate}
With the DF strategy and equal transmission power assumptions
across the relays, the received signal at the destination is given
by
\begin{align} \label{dfreceveived}
y=\underbrace{\sum_{i=1}^{L} \sqrt{\frac{(1-\alpha) P
\rho_{iD}}{N} } h_{iD}}_{M} x + w,
\end{align}
where $L$ is a binomial random variable with parameters $(N,p_0
(1-p))$, and $p_0$ is the average probability that the relay can
successfully decode the source message. Specifically, $p_0$ is
computed as
\begin{align} \label{p0}
p_0 = \mathbb{E}\left[ p_0(\mathbf{s}) \right]=
\int_{\mathbf{s}\in \mathcal{S}} p_0(\mathbf{s}) p
(\mathbf{s})d\mathbf{s},
\end{align}
where
\begin{align} \label{relaydying}
p_0(\mathbf{s}) =\Pr \bigg\{ \frac{1}{2} \log \left( 1+\alpha
\gamma_0 \rho_{Si}\left( \mathbf{s} \right) |h_{Si}|^2 \right)
\geq R \bigg\} = \exp \left( - \frac{2^{2R}-1}{\alpha \gamma_0
\rho_{Si}\left( \mathbf{s} \right)} \right),
\end{align}
with $R$ being the target transmission rate.

To compute the outage probability of this network, we need to know
the distribution of $M$ defined in (\ref{dfreceveived}). For a different
path-loss $\rho_{iD}\left( \mathbf{s} \right)$, the probability of
decoding failure defined in (\ref{relaydying}) is different, which
is jointly determined by the location of the relay and the fading
degree. To calculate the PDF of successful relay decoding
over different locations of $\mathbf{s}$, we have the following
proposition.

\begin{Proposition} \label{non_df_surving}
Without considering the node failures caused by random attacks,
the PDF of successful relay decoding at a given location $\mathbf{s}$
is given as
\begin{align} \label{pdf_surviving}
f(\mathbf{s}) = \frac{1}{  p_0 }p(\mathbf{s})p_0(\mathbf{s}),~
\mathbf{s} \in \mathcal{S}.
\end{align}
\end{Proposition}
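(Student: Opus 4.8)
The plan is to derive the conditional density $f(\mathbf{s})$ of a relay's location \emph{given} that it has successfully decoded the source message. The key observation is that the quantity we want is a standard application of Bayes' rule (equivalently, the definition of a conditional PDF restricted to an event). I treat the relay location $\mathbf{S}$ as a random variable with prior density $p(\mathbf{s})$ over $\mathcal{S}$, and I treat ``successful decoding'' as an event $D$ whose probability, \emph{conditioned on the location} $\mathbf{s}$, is exactly $p_0(\mathbf{s})$ as given in (\ref{relaydying}). The goal $f(\mathbf{s})$ is then the posterior density $f_{\mathbf{S}\mid D}(\mathbf{s})$.

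First I would write the joint ``density'' of the location and the decoding event: the probability that a relay lies in an infinitesimal neighborhood $d\mathbf{s}$ of $\mathbf{s}$ \emph{and} decodes successfully is $p(\mathbf{s})\,p_0(\mathbf{s})\,d\mathbf{s}$, since conditioned on being at $\mathbf{s}$ the decoding probability is $p_0(\mathbf{s})$ and these are multiplied by the prior. Second, I would compute the total (marginal) probability of successful decoding by integrating this joint density over all of $\mathcal{S}$, which by (\ref{p0}) is precisely
\begin{align*}
\Pr\{D\} = \int_{\mathbf{s}\in\mathcal{S}} p(\mathbf{s})\,p_0(\mathbf{s})\,d\mathbf{s} = \mathbb{E}\left[p_0(\mathbf{s})\right] = p_0.
\end{align*}
Third, I would form the conditional density by dividing the joint density by the normalizing constant $\Pr\{D\}=p_0$, yielding
\begin{align*}
f(\mathbf{s}) = f_{\mathbf{S}\mid D}(\mathbf{s}) = \frac{p(\mathbf{s})\,p_0(\mathbf{s})}{p_0},~\mathbf{s}\in\mathcal{S},
\end{align*}
which is exactly (\ref{pdf_surviving}). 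As a consistency check I would verify that $f(\mathbf{s})$ integrates to one over $\mathcal{S}$, which follows immediately because the numerator integrates to $p_0$ by the definition in (\ref{p0}).

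I do not expect any serious obstacle here, since the result is essentially a restatement of Bayes' rule; the statement even presupposes, by construction, that node failures from random attacks are \emph{not} yet folded in, so $L$ and the attack probability $p$ play no role in this particular proposition. The only point requiring mild care is the interpretation of ``successful decoding'' as an event with location-dependent probability $p_0(\mathbf{s})$, and the justification that $p_0>0$ (which holds since $p_0(\mathbf{s})=\exp(-(2^{2R}-1)/(\alpha\gamma_0\rho_{Si}(\mathbf{s})))>0$ strictly on $\mathcal{S}$ for finite $R$ and $\alpha>0$), so that the normalization by $p_0$ is well defined. Beyond that, the derivation is a direct application of the definition of a conditional density and requires no further machinery.
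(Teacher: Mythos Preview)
Your proposal is correct and follows essentially the same approach as the paper: both derive the conditional density of the relay location given successful decoding via Bayes' rule, computing the joint probability $p(\mathbf{s})p_0(\mathbf{s})$ and normalizing by the marginal decoding probability $p_0$. The only cosmetic difference is that the paper phrases the argument in terms of an arbitrary measurable subset $\mathcal{D}\subseteq\mathcal{S}$ (computing $\Pr\{\text{relay }i\in\mathcal{D}\mid\text{decodes successfully}\}$ and then differentiating), whereas you work directly with infinitesimal neighborhoods; these are equivalent formulations of the same elementary computation.
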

\begin{proof}
See Appendix \ref{app_non_df_surviving}.
\end{proof}

Similar to the analysis in the previous subsection, when $N$ goes
to infinity, we know that $M$ is asymptotically Gaussian
\cite{robbins} with $M\sim \mathcal{CN}\left(0,p_0(1-p)(1-\alpha)
\gamma_0 \mathbb{E}_1 (\rho_{iD}) \right)$, where $\mathbb{E}_1
(\rho_{iD})$ is defined as
\begin{align} \label{exp1}
\mathbb{E}_1(\rho_{iD}) = \int_{\mathbf{s}\in \mathcal{S}}
\rho_{iD}(\mathbf{s}) f(\mathbf{s}) d \mathbf{s}.
\end{align}
From the analysis above, we observe that for the DF strategy, the
relay network under consideration is also equivalent to a Rayleigh
fading channel. Therefore, the outage probability with the DF
strategy is
\begin{align} \label{pdf}
p_{\text{DF}} = 1-\exp \left( -\frac{2^{2R}-1}{\gamma_{\text{DF}}}
\right),
\end{align}
where $\gamma_{\text{DF}}$ is the average received SNR at the
destination, and is given as
\begin{align} \label{snrdf}
\gamma_{\text{DF}} = (1-p)(1-\alpha) \gamma_0
\int_{\mathbf{s}\in \mathcal{S}} \rho_{iD}
p(\mathbf{s})p_0(\mathbf{s}) d \mathbf{s}.
\end{align}

Now, we could define the $\epsilon$-outage rate of the DF strategy
as
\begin{align} \label{erdf}
R_{\text{DF}}= \max_{R} \left\{ R: p_{\text{DF}} \leq \epsilon
\right\}.
\end{align}
Generally, it is quite challenging to derive a closed-form for
$R_{\text{DF}}$ in terms of $\epsilon$ and SNR. However, in
practical communication systems, $\epsilon$ is usually set at very
small values, which implies that $\gamma_0 \gg 2^{2R}-1$ is
required. With this assumption, we can approximate
(\ref{relaydying}) and (\ref{pdf}) as
\begin{align}
&p_0(\mathbf{s}) \approx 1- \frac{2^{ 2 R }-1}{\alpha
\gamma_0\rho_{Si}}, \label{ap0}
\\
&p_{\text{DF}} \approx \frac{2^{ 2 R }-1}{\gamma_{\text{DF}}}
\label{apdf},
\end{align}
respectively. Substituting (\ref{snrdf}) and (\ref{ap0}) to (\ref{apdf}), we calculate
the $\epsilon$-outage rate defined in (\ref{erdf}) as
\begin{align} \label{ardf}
R_{\text{DF}} = \frac{1}{2}\log \left( 1+
\frac{(1-p)(1-\alpha)\gamma_0
\mathbb{E}(\rho_{iD})\epsilon}{1+\epsilon
(1-p)\frac{1-\alpha}{\alpha} \mathbb{E} \left(
\frac{\rho_{iD}}{\rho_{Si}} \right)} \right).
\end{align}

\begin{Remark} \label{df_noise}
Based on (\ref{snrdf}), (\ref{ap0}), and (\ref{apdf}), and after some mathematical manipulations, it is shown that the term $\epsilon
(1-p)\frac{1-\alpha}{\alpha} \mathbb{E} \left(
\frac{\rho_{iD}}{\rho_{Si}} \right)$ in (\ref{ardf}) is
mainly due to the fading-related decoding failures (the second term of the right-hand side of (\ref{ap0})); and if $p_0(\mathbf{s})=1$ standing for decoding failures caused by fading, the above term in (\ref{ardf}) degrades to zero. This suggests that losing part of the
relay nodes due to decoding failures is equivalent to increasing a
certain amount of noise at the destination.
\end{Remark}

\begin{Remark}
From (\ref{capacityupper}), Proposition \ref{afachie}, and (\ref{ardf}), we observe that as $\gamma_0$ gets large, the MAC cut-set bound, the AF rate, and the DF rate all scale as $\mathcal{O}\left( \log \left( \gamma_0 \right) \right)$.
\end{Remark}

With the small outage probability assumption adopted above, some
approximation errors may be incurred in the achievable rate
expression. In Fig. \ref{df_appro}, we show the comparison between
the exact DF achievable rate and the approximate rate based on
(\ref{ap0}) and (\ref{apdf}), where the exact rate is numerically
computed by using (\ref{pdf}) and (\ref{snrdf}). All the
parameters used to draw these figures are the same as those later
defined in Section \ref{non_simu}. It is observed that in both the
high and low SNR regimes, the approximation works well; and
as $\epsilon$ gets smaller, the approximation gap gets smaller.

\subsubsection{Effect of Random Attacks}

Based on (\ref{ardf}), the DF rate losses due to random attacks in
both the low and high SNR regimes are evaluated next.

\begin{Proposition}
In the low SNR regime, there is a less than $p$-portion rate loss
for the DF strategy.
\end{Proposition}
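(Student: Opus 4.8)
The plan is to compare the low-SNR DF rate at attack probability $p$ against the attack-free rate $R_0^{\text{Low,DF}}$ (the $p=0$ case), and show the rate loss is strictly less than a $p$-fraction, in contrast to the exact $p$-portion loss observed for the AF strategy. First I would start from the closed-form expression (\ref{ardf}) and take the low-SNR limit $\gamma_0 \to 0$. In this regime the correction term in the denominator, namely $\epsilon(1-p)\frac{1-\alpha}{\alpha}\mathbb{E}(\rho_{iD}/\rho_{Si})$, does not simply vanish the way it does for the AF analysis, because here it arises from fading-related decoding failures rather than amplified noise (cf. Remark \ref{df_noise}). So the key is to retain this denominator term and track its dependence on $p$.

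The key steps, in order, would be: (i) write $R_{\text{DF}}^{\text{Low}} \approx \frac{1}{2\ln 2}\,\gamma_{\text{DF}}^{\text{eff}}$ using $\log(1+x)\approx x/\ln 2$ for small argument, where the effective SNR is the fraction appearing in (\ref{ardf}); (ii) write the same quantity at $p=0$ to obtain $R_0^{\text{Low,DF}}$; (iii) form the ratio $R_{\text{DF}}^{\text{Low}}/R_0^{\text{Low,DF}}$ and simplify. Both numerator and denominator of the effective SNR carry a factor that depends on $p$: the numerator scales with $(1-p)$, while the denominator is $1 + \epsilon(1-p)\frac{1-\alpha}{\alpha}\mathbb{E}(\rho_{iD}/\rho_{Si})$, which is \emph{smaller} at nonzero $p$ than at $p=0$ (since $1-p<1$). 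The crucial observation is that the $(1-p)$ in the numerator is partially compensated by a denominator that also shrinks with $p$, so the net multiplicative factor on the rate exceeds $(1-p)$.

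Concretely, I expect the ratio to take the form
\begin{align*}
\frac{R_{\text{DF}}^{\text{Low}}}{R_0^{\text{Low,DF}}}
= (1-p)\,\frac{1+\epsilon\,\frac{1-\alpha}{\alpha}\mathbb{E}\!\left(\frac{\rho_{iD}}{\rho_{Si}}\right)}
{1+\epsilon(1-p)\frac{1-\alpha}{\alpha}\mathbb{E}\!\left(\frac{\rho_{iD}}{\rho_{Si}}\right)},
\end{align*}
and since the fraction on the right exceeds $1$ for any $p\in(0,1]$ (numerator larger than denominator because $1-p<1$), the overall factor is strictly greater than $(1-p)$. Hence the retained rate is more than $(1-p)R_0^{\text{Low,DF}}$, equivalently the loss is strictly less than the $p$-portion suffered by AF, which is exactly the claim. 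The proof then concludes by noting this establishes the DF strategy's greater robustness to random attacks in the low SNR regime.

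The main obstacle I anticipate is the interplay between the two limits $\gamma_0\to 0$ and $\epsilon\to 0$, which must be handled carefully since the denominator correction term itself carries a factor of $\epsilon$. If $\epsilon$ is driven to zero too fast relative to $\gamma_0$, the compensating denominator effect disappears and the loss degenerates back to the full $p$-portion. So the delicate point is to justify that, under the small-but-fixed-$\epsilon$ regime in which (\ref{ardf}) was derived, the denominator term is genuinely retained, making the inequality strict; I would state explicitly that $\epsilon$ is held at a small positive value (consistent with the approximation assumptions used to obtain (\ref{ardf})) rather than taken to zero simultaneously with $\gamma_0$.
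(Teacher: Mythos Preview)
Your proposal is correct and follows essentially the same route as the paper: linearize (\ref{ardf}) at low SNR, form the ratio with the $p=0$ rate, and observe that the resulting multiplicative factor is $(1-p)$ times $\frac{1+\epsilon\frac{1-\alpha}{\alpha}\mathbb{E}(\rho_{iD}/\rho_{Si})}{1+(1-p)\epsilon\frac{1-\alpha}{\alpha}\mathbb{E}(\rho_{iD}/\rho_{Si})}>1$, yielding $R^{\text{Low}}\geq (1-p)R_0^{\text{Low}}$. Your explicit remark about holding $\epsilon$ fixed (small but positive) while sending $\gamma_0\to 0$ is a nice clarification the paper leaves implicit.
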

\begin{proof}
We have
\begin{align}
R^{\text{Low}} & \approx \frac{1}{2} \cdot
\frac{1-p}{1+(1-p)\epsilon \frac{1-\alpha}{\alpha} \mathbb{E}
\left( \frac{\rho_{iD}}{\rho_{Si}}\right) } (1-\alpha)\gamma_0
\mathbb{E}(\rho_{iD})\epsilon \\
& = (1-p) \cdot \frac{1+\epsilon \frac{1-\alpha}{\alpha}
\mathbb{E} \left(
\frac{\rho_{iD}}{\rho_{Si}}\right)}{1+(1-p)\epsilon
\frac{1-\alpha}{\alpha} \mathbb{E} \left(
\frac{\rho_{iD}}{\rho_{Si}}\right)} \cdot R_{\text{0}}^{\text{Low}} \label{non_df_loss_low} \\
&\geq (1-p) R_{\text{0}}^{\text{Low}}
\end{align}
where $R_{\text{0}}^{\text{Low}} = \frac{1}{2} \cdot
\frac{(1-\alpha)\gamma_0 \mathbb{E}(\rho_{iD})\epsilon}
{1+\epsilon \frac{1-\alpha}{\alpha} \mathbb{E} \left(
\frac{\rho_{iD}}{\rho_{Si}}\right) } $ is the achievable rate of
the DF strategy in the low SNR regime with $p=0$. Since the second
term in (\ref{non_df_loss_low}) is larger than one, the achievable
rate loss due to random attacks is smaller than $p$-portion
compared to the attack-free case. Only when $\epsilon$ goes to
zero, or $p$ goes to zero, the percentage of the rate loss is
about $p$-portion.
\end{proof}

\begin{Remark}
The percentage of achievable rate loss for the DF strategy is less
than that of the AF strategy. The reason is that due to random
attacks, less portion of relays suffer from decoding failures,
i.e., $(1-p)$-portion compared to the attack-free case. Then, from
the argument of Remark \ref{df_noise}, it is known that this is effectively
equivalent to adding less noises to the receiver. Therefore, the
achievable rate loss is less than $p$-portion, even we lose $p$-portion of relay nodes.
\end{Remark}

\begin{Proposition}
In the high SNR regime, there is a constant rate loss for the DF
strategy, and the constant is upper-bounded by $\frac{1}{2}\log
\left( \frac {1}{1-p} \right)$.
\end{Proposition}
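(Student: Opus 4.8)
The plan is to mirror the high-SNR AF argument in Proposition \ref{losshigh}, since the DF achievable rate in (\ref{ardf}) has the same structural form as the AF rate, except that the topology term $\mathbb{E}\left(\frac{\rho_{iD}}{\rho_{Si}}\right)$ is now weighted by the outage target $\epsilon$. First I would take the high-SNR limit of (\ref{ardf}): as $\gamma_0 \to \infty$ the numerator inside the logarithm dominates the additive $1$, so
\begin{align*}
R^{\text{High}} \sim \frac{1}{2}\log\left( \frac{(1-p)(1-\alpha)\gamma_0 \mathbb{E}(\rho_{iD})\epsilon}{1+\epsilon(1-p)\frac{1-\alpha}{\alpha}\mathbb{E}\left(\frac{\rho_{iD}}{\rho_{Si}}\right)} \right).
\end{align*}
Setting $p=0$ in this expression identifies the attack-free high-SNR rate $R_{\text{0}}^{\text{High}}$, which serves as the reference against which the loss is measured.

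Next I would form the difference $R_{\text{0}}^{\text{High}} - R^{\text{High}}$. The key observation is that the factor $(1-\alpha)\gamma_0\mathbb{E}(\rho_{iD})\epsilon$ appears identically in both numerators and therefore cancels, so the difference is independent of $\gamma_0$ --- this is precisely the statement that the loss is a constant in the SNR. Carrying out the cancellation yields
\begin{align*}
R^{\text{High}} = R_{\text{0}}^{\text{High}} - \frac{1}{2}\log\left(\frac{1}{1-p}\right) + \underbrace{\frac{1}{2}\log\left( \frac{1+\epsilon\frac{1-\alpha}{\alpha}\mathbb{E}\left(\frac{\rho_{iD}}{\rho_{Si}}\right)}{1+\epsilon(1-p)\frac{1-\alpha}{\alpha}\mathbb{E}\left(\frac{\rho_{iD}}{\rho_{Si}}\right)} \right)}_{C_2},
\end{align*}
which is the exact DF analogue of the AF decomposition (\ref{non_af_high_1}), with the correction term $C_2$ now carrying an $\epsilon$ weight.

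I would then bound the loss. Because $0 \leq p \leq 1$, the denominator inside $C_2$ is no larger than its numerator, so the argument of the logarithm is at least one and hence $C_2 \geq 0$. Consequently
\begin{align*}
R_{\text{0}}^{\text{High}} - R^{\text{High}} = \frac{1}{2}\log\left(\frac{1}{1-p}\right) - C_2 \leq \frac{1}{2}\log\left(\frac{1}{1-p}\right),
\end{align*}
with equality only when $C_2 = 0$, i.e. when $p=0$ (or when $\epsilon \to 0$). This proves both that the loss is a positive constant and that it is upper-bounded by $\frac{1}{2}\log\left(\frac{1}{1-p}\right)$.

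Since each step is a direct algebraic manipulation, there is no genuine analytic obstacle; the only point requiring care is confirming the sign $C_2 \geq 0$, which hinges on the monotonicity of the denominator in $p$, together with verifying that the $\gamma_0$-dependent factor truly cancels so that the loss is constant rather than merely bounded. I would double-check that the $\epsilon$-weighting of the topology term does not disturb this cancellation, since that is the only place where the DF argument departs from its AF counterpart.
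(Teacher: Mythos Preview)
Your proposal is correct and matches the paper's own proof essentially line for line: the paper also drops the additive $1$ at high SNR, writes $R^{\text{High}} = R_{\text{0}}^{\text{High}} - \frac{1}{2}\log\frac{1}{1-p} + C_2$ with the same $C_2$, observes $C_2 \geq 0$, and concludes the loss equals $\frac{1}{2}\log\frac{1}{1-p} - C_2 \leq \frac{1}{2}\log\frac{1}{1-p}$ with equality only when $p=0$ or $\epsilon \to 0$.
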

\begin{proof}
We have
\begin{align}
R^{\text{High}} & \approx \frac{1}{2}\log \left(
\frac{(1-\alpha)\gamma_0
\mathbb{E}(\rho_{iD})\epsilon}{1+\epsilon \frac{1-\alpha}{\alpha}
\mathbb{E} \left( \frac{\rho_{iD}}{\rho_{Si}} \right)} \right)
-\frac{1}{2}\log \left( \frac{1}{1-p} \right) +
\underbrace{\frac{1}{2} \log \left( \frac{ \left(1+\epsilon
\frac{1-\alpha}{\alpha} \mathbb{E} \left(
\frac{\rho_{iD}}{\rho_{Si}} \right) \right)}{1+\epsilon (1-p)
\frac{1-\alpha}{\alpha} \mathbb{E} \left(
\frac{\rho_{iD}} {\rho_{Si}} \right)} \right)}_{C_2} \label{non_df_loss_high}\\
& \geq R_{\text{0}}^{\text{High}} - \frac{1}{2}\log \left( \frac
{1}{1-p} \right),
\end{align}
where $R_{\text{0}}^{\text{High}}= \frac{1}{2}\log \left(
\frac{(1-\alpha)\gamma_0
\mathbb{E}(\rho_{iD})\epsilon}{1+\epsilon \frac{1-\alpha}{\alpha}
\mathbb{E} \left( \frac{\rho_{iD}}{\rho_{Si}} \right)} \right) $.
Therefore, the achievable rate loss is $R_{\text{0}}^{\text{High}}
- R^{\text{High}}=\frac{1}{2}\log \left( \frac {1}{1-p}
\right)-C_2 \leq \frac{1}{2}\log \left( \frac {1}{1-p} \right)$,
and the equality holds only when $\epsilon$ or $p$ goes to zero.
\end{proof}

\begin{Remark}
Compared with the constant term $C_1$ in (\ref{non_af_high_1}),
the term $C_2$ in (\ref{non_df_loss_high}) is further determined
by $\epsilon$. Since $C_2$ goes to zero as $\epsilon$ tends to
zero and $\epsilon$ is usually much smaller than 1, the absolute
rate loss for the DF strategy is larger than that of the AF strategy given in (\ref{non_af_high_1}).
Moreover, in the high SNR regime, there is a constant gap
between the AF and DF rates, which can be neglected as $\gamma_0$ increases. Therefore,
when using $1-\frac{R}{R_0}$ as the metric, we conclude that in
the high SNR regime, the AF strategy is less sensitive to random
attacks than the DF strategy.
\end{Remark}

\subsubsection{Power Allocation to the Source} \label{power_noncoherent}

In this subsection, we derive the optimal power allocation scheme
for the DF strategy by maximizing the achievable rate in
(\ref{ardf}) over $\alpha$.
\begin{Proposition}
The achievable rate defined in (\ref{ardf}) is a concave function
over $\alpha$.
\end{Proposition}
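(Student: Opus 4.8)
The plan is to avoid differentiating the full logarithmic expression in (\ref{ardf}) directly, and instead to split the problem into an inner concavity claim plus a standard composition argument. First I would pull the constants out of the signal-to-noise term inside the logarithm. Writing $a = (1-p)\gamma_0\mathbb{E}(\rho_{iD})\epsilon$ and $b = \epsilon(1-p)\mathbb{E}(\rho_{iD}/\rho_{Si})$, both strictly positive, the argument of the logarithm in (\ref{ardf}) is $1+S(\alpha)$, where, after multiplying the numerator and denominator of the inner fraction by $\alpha$,
\begin{align*}
S(\alpha) = \frac{a(1-\alpha)}{1 + b\frac{1-\alpha}{\alpha}} = \frac{a\,\alpha(1-\alpha)}{b + (1-b)\alpha}.
\end{align*}
Since $x\mapsto\tfrac{1}{2}\log(1+x)$ is increasing and concave, and the composition of a nondecreasing concave function with a concave function is again concave, it suffices to show that $S(\alpha)$ is concave on $[0,1)$.

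To establish the concavity of $S$, I would perform a polynomial long division of the degree-two numerator by the degree-one denominator, obtaining a decomposition of the form
\begin{align*}
S(\alpha) = -\frac{a}{1-b}\,\alpha + \frac{a}{(1-b)^2} - \frac{ab}{(1-b)^2}\cdot\frac{1}{(1-b)\alpha + b}.
\end{align*}
The affine and constant pieces vanish under double differentiation, so only the last term contributes, giving
\begin{align*}
S''(\alpha) = -\frac{2ab}{\big[(1-b)\alpha + b\big]^3}.
\end{align*}
The observation that pins down the sign is that the denominator can be rewritten as $(1-b)\alpha + b = b(1-\alpha) + \alpha$, which is a sum of two nonnegative terms that cannot vanish simultaneously on $[0,1)$, hence is strictly positive there regardless of whether $b$ exceeds one. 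With $a,b>0$ this forces $S''(\alpha)<0$, so $S$ is strictly concave on $[0,1)$. (The borderline case $b=1$, where the division is degenerate, collapses $S$ to the concave parabola $a\,\alpha(1-\alpha)$ and is immediate.)

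Combining the two facts, $R_{\text{DF}} = \tfrac{1}{2}\log(1+S(\alpha))$ is the composition of the nondecreasing concave map $x\mapsto\tfrac{1}{2}\log(1+x)$ with the concave map $S$, and is therefore concave on $[0,1)$, which proves the proposition. The only genuinely delicate step is the positivity of the denominator across the entire interval: one might naively worry that the factor $(1-b)$ changes sign and flips the sign of $S''$, but the rewriting $b(1-\alpha)+\alpha$ dispels this concern at once and is the crux that keeps the argument clean. Everything else is routine algebra, and the composition principle does the work that a brute-force second-derivative computation on the logarithm would otherwise demand.
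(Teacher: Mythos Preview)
Your proof is correct and follows essentially the same approach the paper indicates: the paper defers to Proposition~\ref{nonco_af_high_optpower}, which establishes concavity of the structurally identical SNR expression (your $S(\alpha)$, with $\epsilon$ absorbed into the constant $b$) by direct second-derivative computation, after which concavity of the rate follows by composing with $\tfrac{1}{2}\log(1+\cdot)$. Your treatment is in fact more careful than the paper's---you make the composition step explicit and you verify positivity of the cubed denominator via the rewriting $b(1-\alpha)+\alpha$, whereas the paper's stated second derivative in Proposition~\ref{nonco_af_high_optpower} has the denominator squared rather than cubed and drops a factor in the numerator---but the underlying route is the same.
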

\begin{proof}
The proof is similar to Proposition \ref{nonco_af_high_optpower},
thus skipped.
\end{proof}

Taking the first-order derivative of (\ref{ardf}), we find that
there is only one solution over $\alpha \in [0,1)$, and the
corresponding optimal power allocation factor is given as
\begin{align} \label{optpowerDF}
\alpha_{\text{opt}} =\frac{\sqrt{\epsilon (1-p)\mathbb{E}\left(
\frac{\rho_{iD}}{\rho_{Si}} \right)}}{1+\sqrt{\epsilon
(1-p)\mathbb{E}\left( \frac{\rho_{iD}}{\rho_{Si}} \right)}}.
\end{align}

\begin{Remark}
As we see from (\ref{optpowerDF}), it is easy to observe that when either $p$
increases, or $\epsilon$ decreases, we should \emph{allocate more
power to the relay nodes}, which is similar to the case with the
AF strategy.
\end{Remark}

\subsubsection{Asymptotic Performance}

We now compare the DF rate against the capacity upper bound at
asymptotically small outage probabilities. To obtain such
asymptotic results, we study the behavior of a normalized rate
$\frac{2^{2C}-1}{\epsilon \gamma_0}$ when $\epsilon$ goes to
zero.

\begin{Proposition} \label{non_df_optimal}
As $\epsilon$ goes to zero, the DF strategy can achieve the
capacity upper bound asymptotically, and the asymptotic behavior
of the capacity is given as
\begin{align} \label{lowup}
\mathop {\lim }\limits_{\epsilon \to 0} \frac{2^{2C}-1}{\epsilon
\gamma_0} = (1-p)(1-\alpha)\mathbb{E}\left( \rho_{iD} \right).
\end{align}
\end{Proposition}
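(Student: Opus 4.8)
The plan is to establish (\ref{lowup}) by a sandwich argument: I will show that the normalized quantity $\frac{2^{2C}-1}{\epsilon\gamma_0}$ converges to the same value $(1-p)(1-\alpha)\mathbb{E}(\rho_{iD})$ whether $C$ is taken to be the MAC cut-set upper bound $C_{\text{upper}}$ of (\ref{capacityupper}) or the DF achievable rate $R_{\text{DF}}$ of (\ref{ardf}). Because $R_{\text{DF}}$ lower-bounds and $C_{\text{upper}}$ upper-bounds the true $\epsilon$-outage capacity, matching first-order limits for these two expressions will simultaneously pin down the limit in (\ref{lowup}) and prove that DF attains the upper bound asymptotically as $\epsilon\to 0$.

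First I would treat the upper bound. From (\ref{capacityupper}) and (\ref{snrupper}), inverting the $\frac{1}{2}\log(\cdot)$ gives $2^{2C_{\text{upper}}}-1=\gamma_{\text{upper}}\ln\!\left(\frac{1}{1-\epsilon}\right)=(1-p)(1-\alpha)\gamma_0\mathbb{E}(\rho_{iD})\ln\!\left(\frac{1}{1-\epsilon}\right)$. Dividing by $\epsilon\gamma_0$ and using the elementary expansion $-\ln(1-\epsilon)=\epsilon+O(\epsilon^2)$, so that $\lim_{\epsilon\to 0}\frac{1}{\epsilon}\ln\!\left(\frac{1}{1-\epsilon}\right)=1$, yields $\lim_{\epsilon\to 0}\frac{2^{2C_{\text{upper}}}-1}{\epsilon\gamma_0}=(1-p)(1-\alpha)\mathbb{E}(\rho_{iD})$.

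Next I would treat the DF rate. From (\ref{ardf}), $2^{2R_{\text{DF}}}-1=\frac{(1-p)(1-\alpha)\gamma_0\mathbb{E}(\rho_{iD})\epsilon}{1+\epsilon(1-p)\frac{1-\alpha}{\alpha}\mathbb{E}\left(\frac{\rho_{iD}}{\rho_{Si}}\right)}$. Dividing by $\epsilon\gamma_0$ cancels the factor $\epsilon\gamma_0$ in the numerator and leaves $\frac{(1-p)(1-\alpha)\mathbb{E}(\rho_{iD})}{1+\epsilon(1-p)\frac{1-\alpha}{\alpha}\mathbb{E}\left(\frac{\rho_{iD}}{\rho_{Si}}\right)}$, whose denominator tends to $1$ as $\epsilon\to 0$. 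Hence the normalized DF rate converges to the same value $(1-p)(1-\alpha)\mathbb{E}(\rho_{iD})$, which matches the upper-bound limit and thereby proves both (\ref{lowup}) and the asymptotic optimality.

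The computation is essentially routine, and the two limits are driven by a logarithmic expansion and a direct cancellation, respectively. The only point requiring care, and the main (if mild) obstacle, is justifying that the small-$\epsilon$ approximations (\ref{ap0})--(\ref{apdf}) underlying (\ref{ardf}) become exact in the limit: since small $\epsilon$ forces the target rate to satisfy $2^{2R}-1\ll\gamma_0$, the neglected higher-order terms are $o(\epsilon)$ and therefore do not perturb the first-order limit $\frac{2^{2C}-1}{\epsilon\gamma_0}$. Once this is noted, the sandwich closes and the proposition follows.
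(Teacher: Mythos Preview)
Your proposal is correct and follows essentially the same sandwich argument as the paper: compute the normalized limit for both the MAC cut-set bound and the DF rate and show they coincide. The only cosmetic difference is that the paper works with the reciprocal quantity $\epsilon\big/\!\bigl((2^{2C}-1)/\gamma_0\bigr)$ starting from the exact outage expressions (\ref{nonco_upper}) and (\ref{pdf})--(\ref{snrdf}) and then lets $(2^{2C}-1)/\gamma_0\to 0$, whereas you invert to the closed-form rate expressions (\ref{capacityupper}) and (\ref{ardf}) and let $\epsilon\to 0$; the underlying expansions ($-\ln(1-\epsilon)\approx\epsilon$ and $p_0(\mathbf{s})\to 1$) are identical in both routes.
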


\begin{proof}
See Appendix \ref{app_non_df_asymp}.
\end{proof}

\begin{Remark}
This theorem shows that the DF strategy is asymptotically optimal
when $\epsilon$ goes to zero. With the small outage probability
assumption, the $\epsilon$-outage capacity of the so-defined large
fading relay network is approximately given as
\begin{align} \label{noncoherentdflowsnr}
C \approx \frac{1}{2} \log \left( 1+ (1-p) (1-\alpha)\gamma_0
\mathbb{E} (\rho_{iD}) \epsilon \right) .
\end{align}
\end{Remark}

\section{Simulation and Numerical Results} \label{non_simu}

In this section, we present several simulation and numerical
examples to validate our analysis. The following setup is
deployed: The source, destination, and relays are on a straight line, i.e., we consider the 1-D networking case; the locations of the source and destination are $s_S=0$ and
    $s_D=12$, respectively; relays are uniformly located over a line segment
    $[1,11]$; the path loss exponent $\theta=2$. Note that under these conditions, the dead zone assumption is
satisfied, and the following figures are drawn according to the
analytical results derived in previous sections.

In Fig. \ref{non_appro}, we compare the simulation based and
Gaussian approximation based outage probabilities for both the AF
and DF strategies. We observe that when the number of relays is
large, Gaussian approximation works very well.

In Fig. \ref{non_rate}, we draw the asymptotic $\epsilon$-outage
rates of the AF and DF strategies and the MAC cut-set bound for the
large fading relay network with the optimal power allocation between
the source node and all the relays: For the MAC upper bound,
$\alpha=0$; for the AF and DF strategies, the optimal $\alpha$ is
computed based on Proposition \ref{opt_alpha_non_af} or
(\ref{optpowerDF}), respectively. From this figure, it is observed
that as $\gamma_0$ increases, the gaps between the achievable rates and the
upper bound turn to be constants. In particular, for the case
$\epsilon=0.1$, the gap between the upper bound and the DF
achievable rate is approximately 0.9 bit; whereas, the gap between
the upper bound and the AF rate is about 1.9 bit.

In Fig. \ref{non_rate_loss}, we plot the rate loss caused by the
random attacks for both the AF and DF strategies. In the low SNR
regime, the DF strategy suffers a smaller relative rate loss than the AF
strategy: a less than
$p$-portion loss for the DF strategy vs. a $p$-portion loss for the AF strategy. In the high SNR regime, it
is observed that AF suffers a less percentage rate loss than the DF
strategy.

\section{Conclusion}

In this paper, we studied the asymptotic
capacity upper bound and achievable rates with the AF and DF
strategies for a fading relay
networks under random attacks, when the number of relays is infinitely large. We
considered the non-coherent relaying scheme, which corresponds to the case without forward-link CSI at the relays. We proved that the DF strategy is
asymptotic optimal when the outage probability goes to zero, while
the AF strategy is strictly suboptimal in all SNR regimes. We also derived the optimal power allocation factor between the source and the relays in some special scenarios. Regarding the rate loss due to random attacks, we showed that the AF
strategy is relatively less sensitive to random attacks than the DF strategy
in the high SNR regime, while DF performs in a more robust way in the low SNR regime.

\appendices

\section{Proof of Proposition \ref{prop_non_upp}} \label{app_non_upp}
Consider the MAC cut-set in the relay network, and assume that the
surviving (for the physical attacks) relays perfectly decode the source message and transmit
$t_i=\sqrt{\frac{(1-\alpha)P}{N}}x,~0 \leq i \leq L$. The received
signal at the destination is given as
\begin{eqnarray} \label{nonco_upper}
y=\underbrace{\sum_{i=1}^{L} \sqrt{\frac{(1-\alpha) P
\rho_{iD}}{N} } h_{iD}}_{A} x + w.
\end{eqnarray}
From \cite{robbins}, when $N \rightarrow \infty$, we know that $A$ is an
asymptotically complex Gaussian random variable with $A \sim
\mathcal{CN} \left( 0,(1-p) (1-\alpha)P \mathbb{E}(\rho_{iD})
\right)$. As such, the overall source-relays-destination
transmission is over an equivalent Rayleigh fading channel.
Correspondingly, the lower bound of the outage probability is
\begin{eqnarray}
\epsilon = \Pr \bigg\{ \frac{1}{2} \log \left( 1+
\gamma_{\text{upper}} |h|^2 \right)< C \bigg\}, \nonumber
\end{eqnarray}
where $C$ is the target rate, $h$ is a standard complex Gaussian
random variable, and the average received SNR
$\gamma_{\text{upper}}$ is defined as in (\ref{snrupper}). Then,
based on the outage capacity definition given in Chapter 5 of
\cite{tse}, $C
:=\log\left(1+\mathcal{F}^{-1}(1-\epsilon)\gamma_0 \right)$,
where $\mathcal{F}^{-1}(x)$ is the inverse function of
$\mathcal{F}(x):=\Pr\{ |h|^2 \leq x \}$. By computing
$\mathcal{F}(x)$, and subsequently $\mathcal{F}^{-1}(x)$, we
obtain the outage capacity upper bound given in
(\ref{capacityupper}).

\section{Proof of Proposition \ref{opt_alpha_non_af}} \label{app_non_af_power}
First, we have the following results for $\mathcal{A}$ and
$\mathcal{B}$.
\begin{enumerate}
    \item \label{aaa} For $\mathcal{A}$, we consider the following function
\begin{align*}
f(\alpha,\mathbf{s})& =
\frac{\alpha(1-\alpha)\gamma_0^2\rho_{Si}\rho_{iD}}{1+\alpha
\gamma_0\rho_{Si}}
 = \gamma_0^2\rho_{Si}\rho_{iD}\left(c_1\alpha+c_2+\frac{c_3} {1+\alpha
\gamma_0\rho_{Si}}\right),
\end{align*}
where $c_1,~c_2$, and $c_3$ are some constants, and
$c_3=-\left(\frac{1}{\gamma_0\rho_{Si}}+\frac{1}{\gamma_0^2\rho_{Si}^2}\right)<0$.
Since $c_1\alpha+c_2$ is affine and
$\frac{c_3}{1+\alpha\gamma_0\rho_{Si}}$
 is concave, $f(\alpha,\mathbf{s})$ is concave in $\alpha$ for any given $\mathbf{s}$.
 Moreover, (\ref{A}) is the integral over $\mathbf{s}$, which does not
 change the concavity of the original function. Therefore,
 (\ref{A}) is also concave in $\alpha$.

    \item \label{bbb}For $\mathcal{B}$, we consider the following
    function
\begin{align*}
g(\alpha,\mathbf{s})=\frac{(1-\alpha)\gamma_0 \rho_{iD} }{1+
\alpha\gamma_0 \rho_{Si}} \nonumber
=\frac{\rho_{iD}}{\rho_{Si}}\left( \frac{1}{1+\alpha \gamma_0
\rho_{Si}} -1 \right).
\end{align*}
Since $g(\alpha,\mathbf{s})$ is convex in $\alpha$, by a similar
argument as in \ref{aaa}), we have that $1+\mathcal{B}$ is convex
in $\alpha$.
\end{enumerate}

From \ref{aaa}) and \ref{bbb}), we know that $\mathcal{A}>0$ is
concave and $1+\mathcal{B}>0$ is convex. Based on Example 3.38 in
\cite{boyd}, we know that $\frac{1+\mathcal{B}}{\mathcal{A}}$ is a
quasiconvex function. Therefore, we conclude that
$\gamma_{AF}=\frac{\mathcal{A}}{1+\mathcal{B}}$ is a quasiconcave
function over $\alpha$.

\section{Proof of Proposition \ref{non_df_surving}} \label{app_non_df_surviving}
For any subset $\mathcal{D} \subseteq \mathcal{S}$, the
probability that the $i$-th relay node is located in $\mathcal{D}$
and can successfully decode the source message, can be computed as
\begin{align}
\Pr\{\text{relay $i$ } \in ~\mathcal{D},~\text{and relay $i$
decodes successfully}\} = \int_{\mathbf{t} \in \mathcal{D}}
p(\mathbf{t}) p_0(\mathbf{t}) d\mathbf{t}.
\end{align}

Therefore, we have the conditional probability function
\begin{align}
&~~~~\Pr\{\text{relay $i$ }\in ~\mathcal{D} \big| \text{relay $i$
decodes successfully} \} \nonumber \\
& =  \frac{\Pr\{\text{relay $i$ } \in ~\mathcal{D},~\text{and
relay $i$ decodes successfully}\} } {\Pr \{ \text{relay $i$
decodes successfully} \}} \nonumber \\
& = \frac{\int_{\mathbf{t} \in \mathcal{D}} p(\mathbf{t})
p_0(\mathbf{t}) d\mathbf{t}}{\int_{\mathbf{s}\in \mathcal{S}}
p(\mathbf{s})p_0(\mathbf{s})d \mathbf{s}}  =\frac{1}{p_0}\int_{\mathbf{t} \in \mathcal{D}} p(\mathbf{t})
p_0(\mathbf{t}) d\mathbf{t}. \label{pcf_surviving}
\end{align}
which is the probability of the successful decoding relay node
being located in area $\mathcal{D}$. Taking derivative of
(\ref{pcf_surviving}), we have the corresponding PDF given as
(\ref{pdf_surviving}).

\section{Proof of Proposition \ref{non_df_optimal}} \label{app_non_df_asymp}
First we derive the asymptotic behavior of the outage probability
$\epsilon$ for the MAC cut-set upper bound, and we have
\begin{align*}
\mathop {\lim }\limits_{ \epsilon  \to 0} \frac{{ \epsilon }}
{\frac{2^{2C}-1}{\gamma_0}} &=\mathop {\lim }\limits_{ \epsilon
\to 0} \frac{ \Pr \left\{ \frac{1}{2} \log\left( 1+
\gamma_{\text{upper}} |h_{iD}|^2 \right) < C \right\}
} {\frac{2^{2C}-1}{\gamma_0}} \\
&= \mathop {\lim }\limits_{\frac{2^{2C}-1}{\gamma_0} \to 0}
\frac{1-\exp \left( - \frac{2^{2C}-1 }{
\gamma_{\text{upper}}}\right) } {\frac{2^{2C}-1}{\gamma_0}} \\
& = \frac{1}{(1-p)(1-\alpha)\mathbb{E}\left( \rho_{iD} \right)}.
\end{align*}

Next, we consider the asymptotic behavior of the outage
probability for the DF strategy, and we have
\begin{align*}
\mathop {\lim }\limits_{ \scriptstyle \varepsilon  \to 0} \frac{{
\epsilon }}{\frac{2^{2C}-1}{\gamma_0}} &=\mathop {\lim
}\limits_{ \scriptstyle \frac{2^{2C}-1}{\gamma_0}  \to 0}
\frac{1}{(1-p)(1-\alpha) \int_{\mathbf{s} \in \mathcal{S} }
\rho_{iD} p(\mathbf{s})p_0
(\mathbf{s})d\mathbf{s}} \\
&= \mathop {\lim }\limits_{ \scriptstyle
\frac{2^{2C}-1}{\gamma_0}  \to 0} \frac{1}{(1-p)(1-\alpha)\left(
\mathbb{E} \left( \rho_{iD} \right) - \frac{2^{2C}-1}{\alpha
\gamma_0 \mathbb{E} \left( \frac{\rho_{iD}}{\rho_{Si}} \right)}
\right) }   \\
&= \frac{1}{(1-p)(1-\alpha) \mathbb{E}\left( \rho_{iD} \right)}.
\end{align*}
The proposition follows immediately.

% use section* for acknowledgement

% trigger a \newpage just before the given reference
% number - used to balance the columns on the last page
% adjust value as needed - may need to be readjusted if
% the document is modified later
%\IEEEtriggeratref{8}
% The "triggered" command can be changed if desired:
%\IEEEtriggercmd{\enlargethispage{-5in}}

% references section

% can use a bibliography generated by BibTeX as a .bbl file
% BibTeX documentation can be easily obtained at:
% http://www.ctan.org/tex-archive/biblio/bibtex/contrib/doc/
% The IEEEtran BibTeX style support page is at:
% http://www.michaelshell.org/tex/ieeetran/bibtex/
%\bibliographystyle{IEEEtran}
% argument is your BibTeX string definitions and bibliography database(s)
%\bibliography{IEEEabrv,../bib/paper}
%
% <OR> manually copy in the resultant .bbl file
% set second argument of \begin to the number of references
% (used to reserve space for the reference number labels box)

\begin{figure}[h]
\centering \scalebox{.9}{\includegraphics{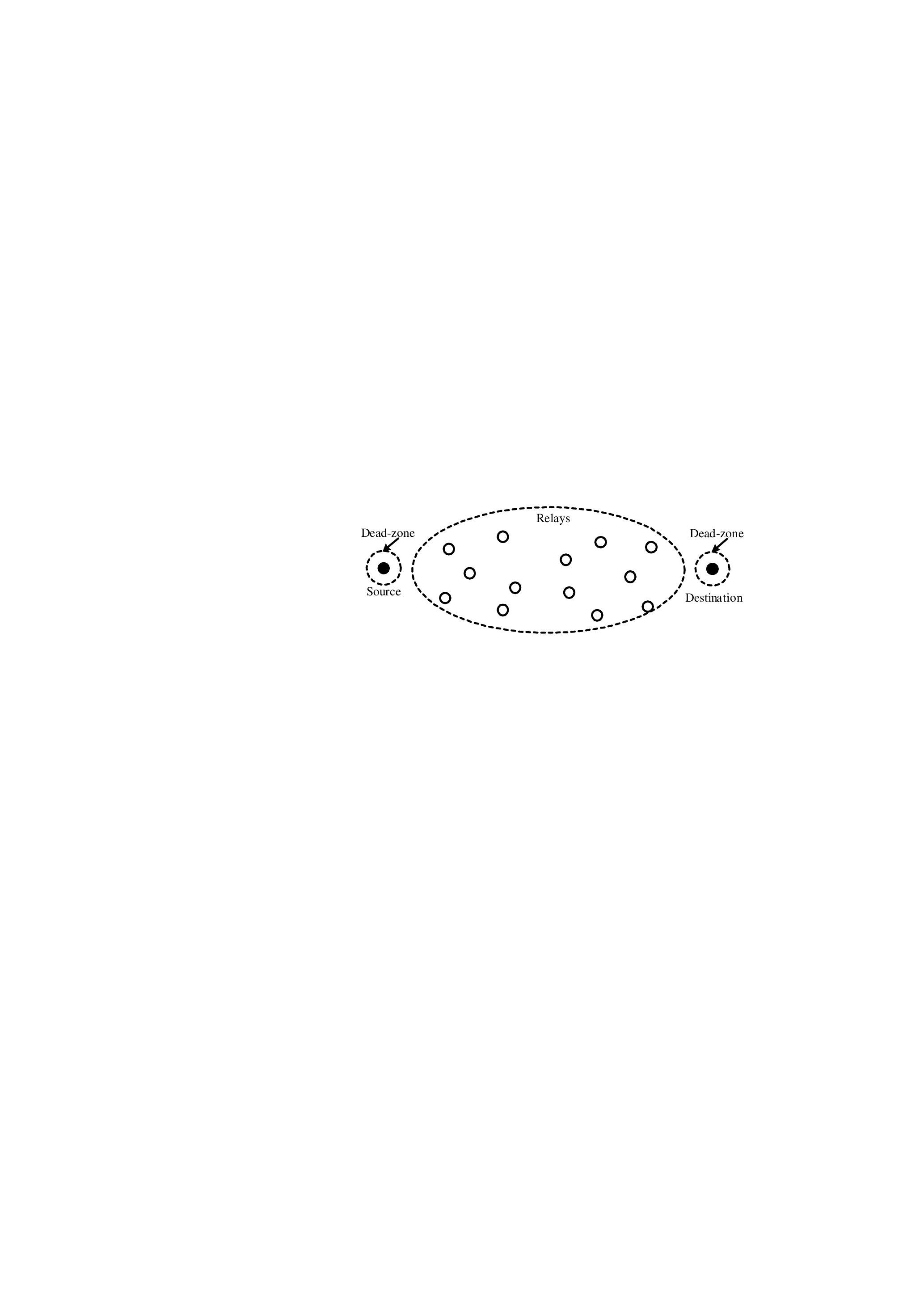}} \caption{Large
fading relay networks with randomly deployed relay nodes and
transmission dead-zone.} \label{fig1}
\end{figure}

\begin{figure}[h]
     \centering
     \subfigure[Low SNR regime]{
          \label{df_appro1}
          \includegraphics[width=.7\linewidth]{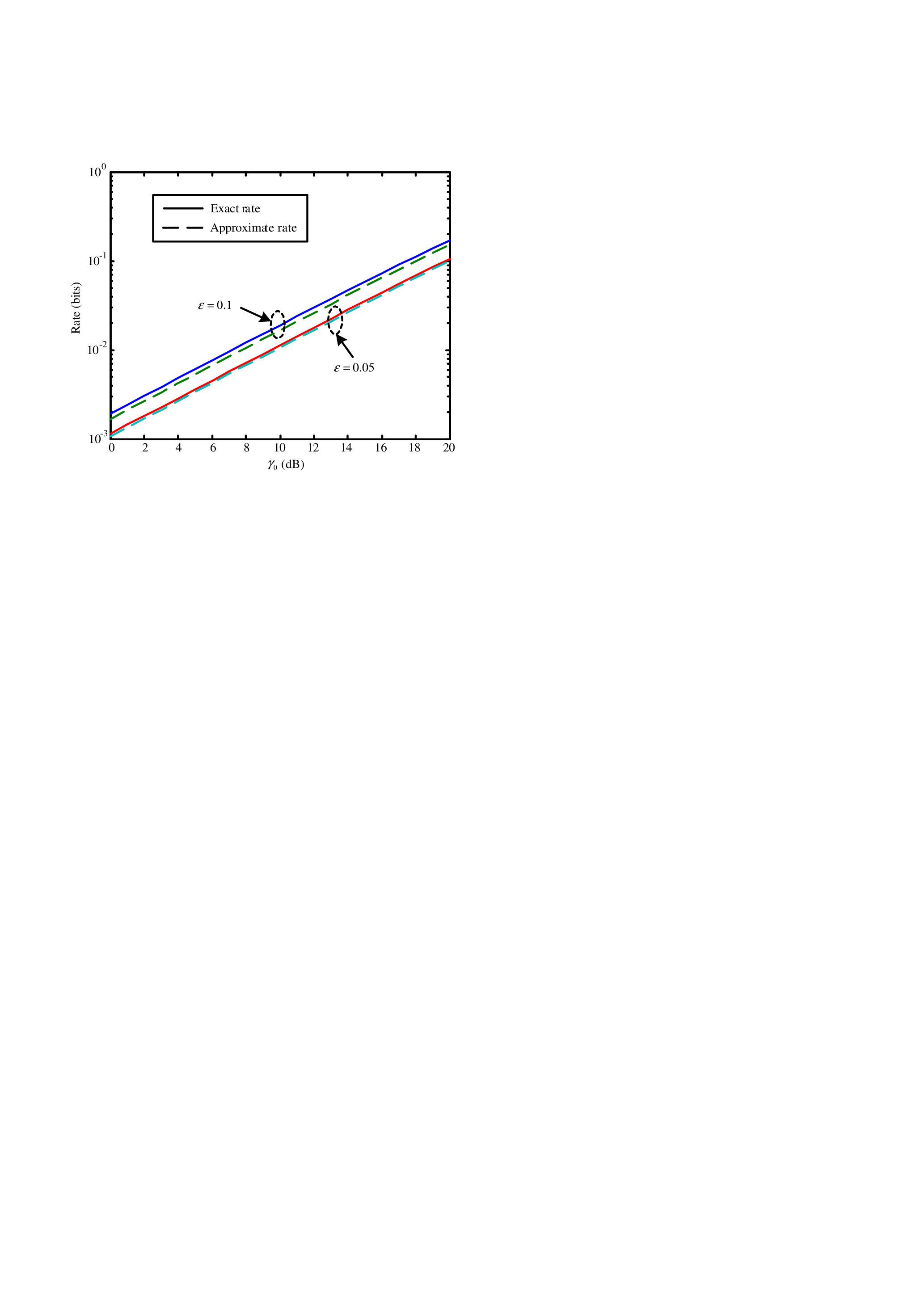}}
     \hspace{.3in}
     \subfigure[High SNR regime]{
          \label{af_appro2}
          \includegraphics[width=.7\linewidth]{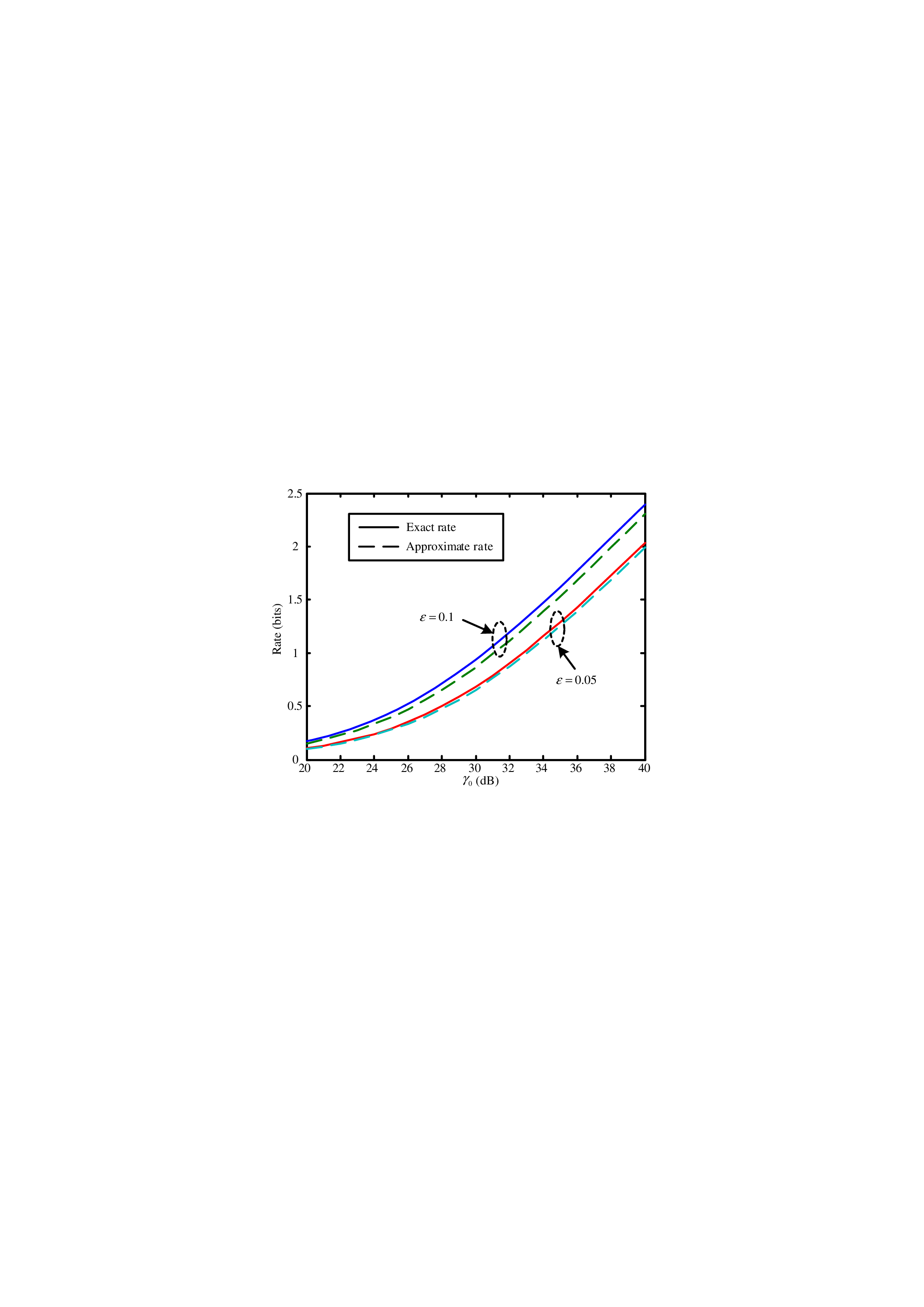}}
     \caption{Comparison of the exact rate and the approximated rate for the DF strategy using the small outage probability approximation}
     \label{df_appro}
\end{figure}

\begin{figure}[h]
\centering
\includegraphics[width=.7\linewidth]{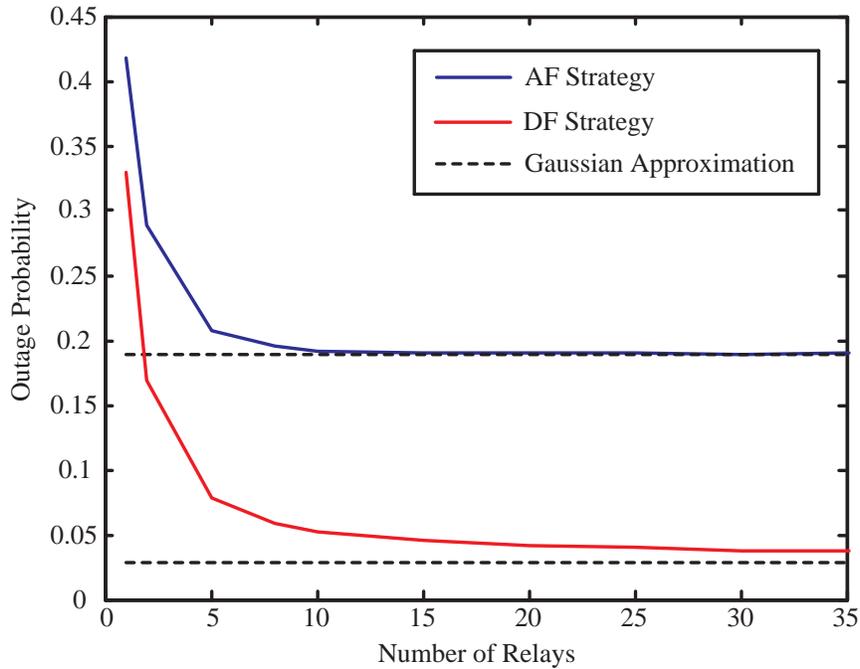}
\caption{Comparison of the outage probabilities with Gaussian
approximation and simulations, $\alpha=0.5$, $p=0.2$, $\gamma_0=30$ dB.}
\label{non_appro}
\end{figure}

\begin{figure}[h]
\centering
\includegraphics[width=.7\linewidth]{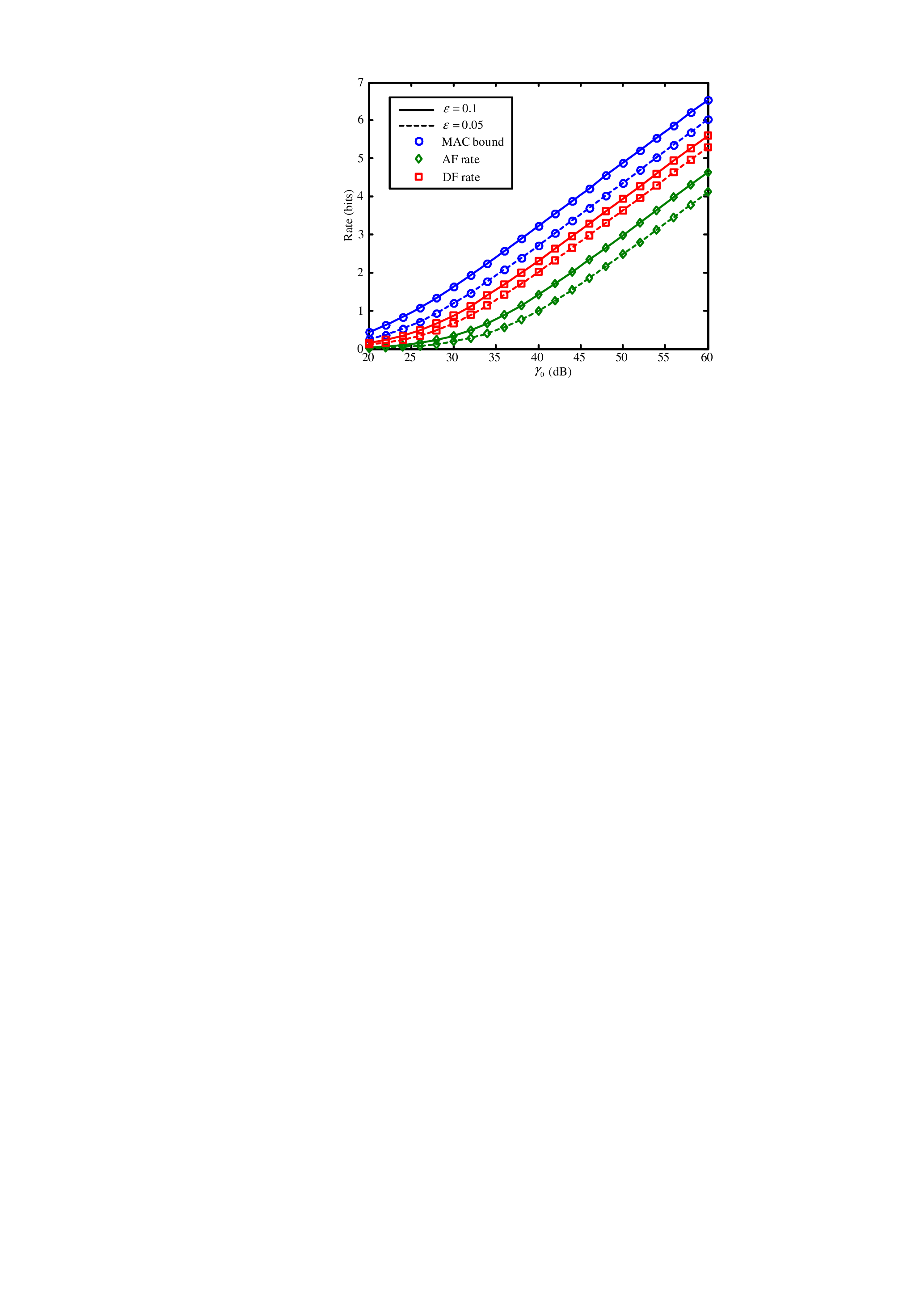}
\caption{Capacity upper bound, the AF rate, and the DF rate for
noncoherent relay networks with optimal power allocation,
$p=0.1$.} \label{non_rate}
\end{figure}

\begin{figure}[h]
\centering
\includegraphics[width=.7\linewidth]{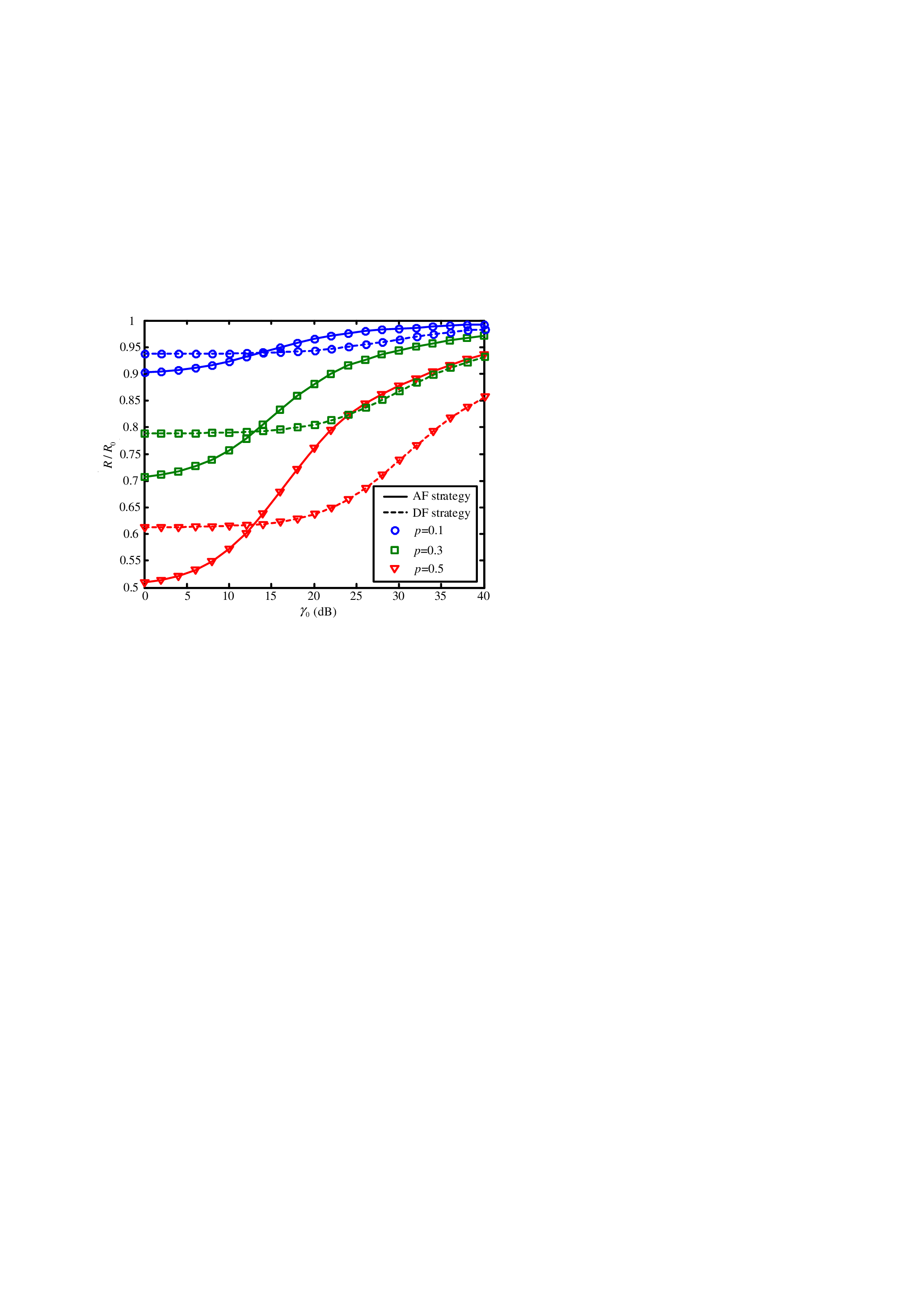}
\caption{Rate loss due to random attacks, $\alpha=0.6$.}
\label{non_rate_loss}
\end{figure}

% that's all folks
\end{document}